\newtheorem{theorem}{Theorem}[section]
\newtheorem{corollary}[theorem]{Corollary}
\begin{document}
\title{Backward Path Growth for Efficient Mobile Sequential Recommendation}
\author{Jianbin Huang, Xuejun Huangfu, Heli Sun, Hong Cheng and Qinbao Song
\thanks{J. Huang and X. Huangfu are with the School of Software, Xidian University, Xi'an, China, 710071. H. Sun and Q. Song are with the Department of Computer Science and Technology, Xi'an Jiaotong University, China, 710049. H. Cheng is with the Chinese University of Hong Kong, Shatin, N.T., Hong Kong, China. Corresponding author: Jianbin Huang (\texttt {jbhuang@xidian.edu.cn})}}
\maketitle

\begin{abstract}
\noindent
The problem of mobile sequential recommendation is presented to suggest a route connecting some pick-up points for a taxi driver so that he/she is more likely to get passengers with less travel cost. Essentially, a key challenge of this problem is its high computational complexity. In this paper, we propose a dynamical programming based method to solve this problem. Our method consists of two separate stages: an offline pre-processing stage and an online search stage. The offline stage pre-computes optimal potential sequence candidates from a set of pick-up points, and the online stage selects the optimal driving route based on the pre-computed sequences with the current position of an empty taxi. Specifically, for the offline pre-computation, a backward incremental sequence generation algorithm is proposed based on the iterative property of the cost function. Simultaneously, an incremental pruning policy is adopted in the process of sequence generation to reduce the search space of the potential sequences effectively. In addition, a batch pruning algorithm can also be applied to the generated potential sequences to remove the non-optimal ones of a certain length. Since the pruning effect continuously increases with the increase of the sequence length, our method can search the optimal driving route efficiently in the remaining potential sequence candidates. Experimental results on real and synthetic data sets show that the pruning percentage of our method is significantly improved compared to the state-of-the-art methods, which makes our method can be used to handle the problem of mobile sequential recommendation with more pick-up points and to search the optimal driving routes in arbitrary length ranges.

\noindent
\emph{Key words:}	Mobile Sequential Recommendation, Potential Travel Distance, Backward Path Growth, Sequence Pruning.
\end{abstract}

\section{Introduction}

With the wide utilization of the sensor, wireless communication and information infrastructures such as GPRS, WiFi and RFID, we can easily access the location trace data for a large number of moving objects. Finding useful knowledge from these trajectory data will provide strong support for the real-time decision and the intelligence services in the related applications \cite{1}. Reducing taxicab cruising cost problem is a typical example \cite{2,3}. An unloaded taxi driving on the road not only leads to waste of fuel and time, but also may result in traffic congestion. However, some high probability pick-up points in the taxi trajectory data of the high-yield drivers can be excavated to guide new drivers to pick up passengers in a more economical and efficient way. Therefore, high-efficiency mobile pattern mining and recommendation algorithm can improve business performance of the drivers and reduce the energy consumption. This is a problem possessing considerable theoretical significance and applicable values \cite{3,4}.

In \cite{2}, Ge \emph{et al.} have proposed a novel problem of Mobile Sequential Recommendation (MSR), which is to suggest a route connecting some pick-up points for an empty cab so that the driver is more likely to get passengers with less travel cost starting from its current position. It is a challenging task, because we need to enumerate and compare all possible routes derived from the given set of pick-up points which involves a rather high computational complexity. To solve the MSR problem, they provided a function of Potential Travel Distance (PTD) for evaluating the cost of a driving route. Essentially, the PTD value of a suggested route is the expected travel distance for an empty cab before it successfully gets new passengers when it travels along the route. To reduce the computational cost, two effective potential sequence pruning algorithms LCP and SkyRoute, which are based on the monotone property of the PTD function, have been proposed in \cite{2}. However, the time and space complexities of these two algorithms both grow exponentially with the number of pick-up points and the length of the suggested driving route, so they can only perform the driving route recommendation with a length constraint in a small number of pick-up points.

However, in real applications, a driver always wants to obtain the optimal driving routes in a range of length, so that he/she can select a preferable driving route among them. In this paper, we consider a generalized mobile sequential recommendation problem with minimal and maximal length constraints. We propose a solution including an offline stage and an online stage. The offline stage effectively prunes the search space and generates a small set of sequence candidates. The online stage is for obtaining the optimal driving route given the current position of an unloaded taxi as the starting point. Specifically, for the offline pre-computation, we have deeply studied the nature of the PTD function and have found that it satisfies the iterative calculation feature. This feature allows us to incrementally construct a potential driving route backward from the terminal point to the starting point. Based on the above calculation feature of the PTD function, we have also found that a set of potential sequences with the same length and the same starting point satisfies the incremental and batch pruning properties. Then, we design a novel mobile sequential recommendation method which takes full advantage of the iterative nature of the PTD function. It incrementally generates potential sequences and removes a lot of impossible search space in the process which greatly enhances the time efficiency and reduces the memory consumption. Among the generated potential sequences with the same length, we can still remove a large number of potential sequences which cannot form the optimal route by using a batch pruning policy. It can dramatically reduce the number of the remaining sequence candidates. Experimental results show that the offline pruning effect and the online search efficiency of our method are significantly improved compared to the existing state-of-the-art methods.

The main contributions of the paper are given as follows: 1) Our algorithm can generate all possible sequence candidates of arbitrary length which can be used to suggest the driving route with any length range constraint; 2) The recursive formula of the PTD function  is presented which makes the incremental generation of the potential sequences possible; 3) A backward incremental sequence generation algorithm with less time and a smaller space complexity is proposed; 4) An efficient method for comparing the PTD cost of different potential sequences and driving routes is presented; 5) An effective sequence pruning method combining incremental pruning and batch pruning is adopted which significantly improves the offline pruning effect.

The rest of the paper is organized as follows. Section 2 introduces the background and the related work. Section 3 gives the iterative nature of the PTD function and the proposed sequence pruning principle. In Section 4, the offline sequence generation and online search algorithms are described in detail. Section 5 gives the experimental results and analysis. Section 6 discusses some extension of our method. Finally, section 7 concludes the paper.

\section{BACKGROUND} \label{sec:formal}

In this section, we first introduce the MSR problem and then describe the previous works.

\subsection{Problem Statement}

Let $c_i$ be a potential pick-up position and \(C = \{ {c_1},{c_2},...,{c_N}\} \) be a set of $N$ pick-up points. The probability that a taxi can successfully carry passengers at the pick-up point \(c_i\) is denoted by \(P(c_i)\), and the set of mutually independence probability is \(P = \{ P({c_1}),P({c_2}),...,P({c_N})\}\). Which driving route will lead to the minimum cost of picking up a new passenger when a taxi travels all or part of the pick-up points in $C$ starting from its current location? This is the MSR problem introduced by Ge \emph{et al.} \cite{2}. The problem can also be found in other scenarios such as recommending tourist routes, searching parking places, \emph{etc}. In the following, we introduce some concepts of the MSR problem and all the symbols used in this paper are described in Table~\ref{table1}.
\begin{table*}
\footnotesize
\caption{Adopted symbols.\label{table1}}
\begin{center}
\begin{tabular}{|c|p{13cm}|} \hline
Symbols & Definition \\ \hline
$C$ & A set of potential pick-up points. \\ \hline
\({c_i}\)& A location point. It represents the current location of the cab when \(i = 0\) and a pick-up point in $C$ with \(i > 0\).\\ \hline
$N$ &The number of pick-up points in $C$. \\ \hline
\(P({c_i})\) &The probability of successfully taking passenger at \({c_i}\).\\ \hline
$D$ & The distance matrix of pairs of location points. \\ \hline
\({D_{{c_i},{c_j}}}\)& The distance from \({c_i}\) to \({c_j}\). \\ \hline
\(\vec r\)& The potential mobile sequence containing one or more different pick-up points. \\ \hline
\(\left\| {\vec r} \right\|\) & The length of potential sequence \(\vec r\) (i.e., the number of different pick-up points in \(\vec r\)).\\ \hline
\({C_{\vec r}}\) & The set of pick-up points in the potential sequence \(\vec r\).\\ \hline
\({P_{\vec r}}\) & The probability vector of the pick-up points consisting of the potential sequence \(\vec r\).\\ \hline
\(\left| A \right|\) & The number of elements in the set $A$. \\ \hline
\( \left\langle c,\vec r \right\rangle \) & A driving route that travels the potential sequence \(\vec r\) starting from the location point $c$. \\ \hline
\(s(\vec r)\) & The source point of the potential sequence \(\vec r\).\\ \hline
\(\overrightarrow R \) & A set of all potential sequences.\\ \hline
\(\overrightarrow {{R^L}} \) & A set of potential sequences with length $L$.\\ \hline
\(\overrightarrow {R_{\vec r}^L} \) & A set of potential sequences that have the same length, source point and pick-up point set as \(\vec r\).\\ \hline
\(\overrightarrow {R_c^L} \) & A set of potential sequences with length $L$ and source point $c$.\\ \hline
\end{tabular}
\end{center}
\end{table*}

 Let \(\vec r = \left\langle {{c_1},{c_2}, \cdots ,{c_L}} \right\rangle \) be a potential sequence with length $L$ derived from the pick-up points set \(C\), where each \(c_i\) in \(\vec r\) is different from each other. ${c_1}$ is called the source point of $\vec r$ and ${c_L}$ is called the destination point. \({C_{\vec r}} = \{ {c_1},{c_2}, \cdots ,{c_L}\} \) denotes the pick-up points set of the potential sequence \(\vec r\). \(\overrightarrow R  = \{ \vec r|{C_{\vec r}} \subseteq C \wedge {C_{\vec r}} \ne \emptyset \} \) is the set of all potential sequences derived from \(C\). \(\left| {\overrightarrow R } \right| = M\) is the number of all possible potential sequences in $\overrightarrow R$. \(P(\vec r) = \left\langle {P({c_1}),P({c_2}), \cdots ,P({c_L})} \right\rangle \) is the probability vector of the potential sequence \(\vec r\) consisting of the probabilities of all the pick-up points in \(\vec r\). \(\vec d = \left\langle {{c_0},\vec r} \right\rangle \) is a driving route, where \({c_0}\) is the current location of a taxi, \(\vec r\) is the sequence of pick-up points, and \(\left\| {\vec d} \right\| = \left\| {\vec r} \right\| = k\) is the length of \(\vec d\).

For a driving route \(\vec d = \left\langle {{c_0},\vec r} \right\rangle \), a PTD function is defined in \cite{2} to evaluate its travel cost. Let $
\small
D(\vec d) = \left\langle \begin{array}{l}
 D_{c_0 ,c_1 } ,(D_{c_0 ,c_1 }  + D_{c_1 ,c_2 } ), \ldots ,\sum\limits_{i=1}^L D_{c_{i-1} ,c_{i} },D_\infty  \\
 \end{array} \right\rangle$ be the distance vector of $\vec d$ and probability vector $
\small
P(\vec d) = \left\langle \begin{array}{l}
 P(c_1 ),\overline {P(c_1 )}  \cdot P(c_2 ), \ldots ,\prod\limits_{i=1}^{L-1} \overline {P({c_i})}
 \cdot P(c_L ), \prod\limits_{i=1}^L \overline {P({c_i})}  \\
 \end{array} \right\rangle$, then the PTD cost of \(\vec d\)
can be calculated by
\begin{equation}
\label{equ1}
{F(\vec d) = F({c_0},\vec r,P(\vec r)) = D(\vec d) \cdot P(\vec d)},
\end{equation}
where \({D_\infty}\) represents the desired maximum cruising distance of a driver for picking up new passengers, and it can be manually specified. The PTD value of a driving route \(\vec d\) represents the expected travel distance of an empty cab for picking up new passengers when it is driving along the route. The smaller the PTD cost of a driving route, the shorter travel distance and the less required energy and cost for the cab to take new guests driving along it.

The objective of the simple MSR problem is to recommend a driving route derived from the set of pick-up points $C$ for a cab driver, so that the expected potential travel distance (PTD) for finding new passengers is minimal. An illustration example is shown in Figure~\ref{fig1}, there are two different driving routes $\vec d_1 =\left\langle {c_0, c_1, c_2} \right\rangle$ and $\vec d_2 =\left\langle c_0, c_2, c_3 \right\rangle$ with length 2. Let ${D_\infty} = 10$. We can get that \(D(\vec d_1) = \left\langle {{D_{{c_0},{c_1}}},({D_{{c_0},{c_1}}} + {D_{{c_1},{c_2}}}}), {D_\infty} \right\rangle = \left\langle 2, 7, 10 \right\rangle \), \( P(\vec d_1) = \left\langle {P({c_1})}, \overline {P({c_1})} \cdot {P({c_2})},  \overline {P({c_1})} \cdot \overline {P({c_2})} \right\rangle = \left\langle 0.5, 0.15, 0.35 \right\rangle \), \(D(\vec d_2) = \left\langle {{D_{{c_0},{c_2}}},({D_{{c_0},{c_2}}} + {D_{{c_2},{c_3}}}}), {D_\infty} \right\rangle = \left\langle 4, 5, 10 \right\rangle \), and \\
\( P(\vec d_2) = \left\langle P({c_2}), \overline {P({c_2})} \cdot {P({c_3})}, \overline {P({c_2})} \cdot \overline {P({c_3})} \right\rangle = \left\langle 0.3, 0.56, 0.14 \right\rangle \). So the PTD cost of $\vec d_1$ is $F(\vec d_1) = 2 \times 0.5 + 7 \times 0.15 + 10 \times 0.35 =5.55$ and the PTD cost of $\vec d_2$ is $F(\vec d_2) = 4 \times 0.3 + 5 \times 0.56 + 10 \times 0.14 =5.4$. We can see that the PTD cost of $\vec d_2$ is smaller than that of $\vec d_1$ and then $\vec d_2$ should be recommended.
\begin{figure}[!t]
\centering
\includegraphics[width=3.2in]{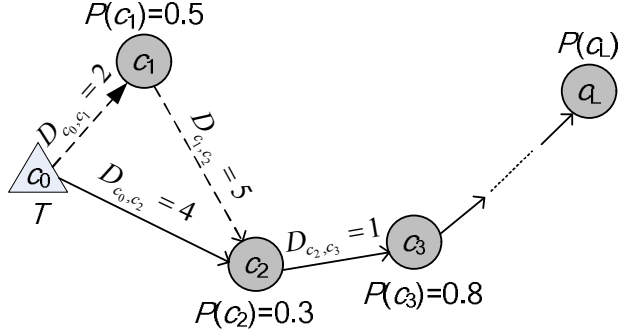}
\caption{An example of the driving route and its PTD cost.\label{fig1}}
\end{figure}

Since the computational complexity of the simple MSR problem is $O(N!)$ \cite{2}, a brute-force method for searching the optimal route in $\overrightarrow R$ is inefficient. In \cite{2}, Ge \emph{et al.} focus on the MSR problem with a length constraint due to the high complexity of the simple MSR problem.

However, in real life, a user usually prefers to request a route within a length range. For example, a cab driver wants to get an optimal driving route in the nearby area with length between 3 and 5. In this case, a recommendation method with a length constraint will be inefficient in handling such a service request while a recommendation method with unconstraint simple MSR problem is also inefficient when the suggested length is less than 3 or more than 5. Therefore, we focus on a more general MSR problem in this paper with length between the minimum $L_{min}$ and the maximum $L_{max}$. The generalized MSR problem is given as follows.

\begin{center}
\begin{tabular}[!thp]{|l|} \hline
\footnotesize\textbf{The generalized MSR problem}\\

\vspace{-0.5cm}
\\
\footnotesize\textbf{Given:} \\ \vspace{-0.1cm}
\footnotesize A set of potential pick-up points \(C = \{ {c_1},{c_2},...,{c_N}\} \);\\ \vspace{-0.1cm}
\footnotesize A probability set \(P = \{ P({c_1}),P({c_2}),...,P({c_N})\} \);\\ \vspace{-0.1cm}
\footnotesize A potential sequence set \(\overrightarrow R  = \{ {\vec r_1},{\vec r_2},...,{\vec r_M}\} \); \\ \vspace{-0.1cm}
\footnotesize The position \({c_0}\) of a cab who needs the service; \\ \vspace{-0.1cm}
\footnotesize　The suggested minimal length \({L_{\min }} \in \{ 1,2, \ldots ,N\} \); \\ \vspace{-0.1cm}
\footnotesize　The suggested maximal length \({L_{\max }} \in \{ {L_{\min }}, \ldots ,N\} \).\\ \vspace{-0.1cm}
\vspace{-0.3cm}
\\
\footnotesize\textbf{Objective:} Recommending an optimal driving route \\
\footnotesize　\(\vec d =  < {c_0},\vec r > \), s.t. \\
\footnotesize　\hspace{8em}\(\mathop {\min }\limits_{\vec r \in \overrightarrow R } \)\(F({c_0},\vec r,P(\vec r))\),\\
\footnotesize　where \(\vec r \in \overrightarrow R \) and \({L_{\min }} \le \left\| {\vec d} \right\| \le {L_{\max }}\).\\  \hline
\end{tabular}
\end{center}

Actually, the above problem is a computational extension of the simple MSR problem with more flexible parameter specification. When we set \({L_{\min }} = 1\) and \({L_{\max}} = N\), it is the simple MSR problem. Moreover, the length constrained MSR problem can be obtained by setting \({L_{\min }} = {L_{\max }} = L\) \cite{2}. In this paper, we present a method to handle any cases of \(1 \le {L_{\min }} \le {L_{\max }} \le N\). In particular, in order to compare the cost of potential sequences in various lengths, we set the \({D_\infty}\) to be equal for all the suggested routes of arbitrary length.

Since the number of  driving routes satisfying the length constraint from \(L_{\min}\) to \(L_{\max}\) is \(\sum\limits_{L = {L_{\min }}}^{{L_{\max }}} {\left( {\begin{array}{*{20}{c}}
N\\
L
\end{array}} \right)}  \cdot L!\), the computational complexity of the generalized MSR problem is no more than the complexity of the simple MSR problem $O(N!)$ and is no less than the complexity of the MSR problem with fixed length $L_{max}$. Therefore, it cannot be effectively solved by the brute-force search method.

\subsection{Related Work}

In recent years, intelligent transportation systems and trajectory data mining have aroused widespread attentions \cite{1,9,21,22}. Mobile navigation and route recommendation have become a hot topic in this research field \cite{2,10,11, 12,13,14,20,23,24,25}.

The MSR problem presented by Ge \emph{et al.} in \cite{2} is rather different from the traditional problems such as Shortest-Path problem \cite{15, 16}, Traveling-Salesman problem \cite{17} and Vehicle-Scheduling problem \cite{18}. Because for the shortest path computation problem, the source and destination nodes of an object are known in advance. However, for MSR problem, both of them are unknown. The traditional Traveling-Salesman Problem (TSP) gets a shortest path that includes all $N$ locations while MSR problem is to find a path that consists of a subset of given $N$ locations. In addition, the traditional Vehicle-Scheduling problem needs to determine a set of duties in advance while the pick-up routes (jobs) among several locations is uncertain for the MSR problem.

In \cite{2}, the authors focus on the MSR problem with a length constraint due to the high computational complexity of the unconstraint simple MSR problem. To reduce the search space, they proposed a route dominance based sequence pruning algorithm LCP. However, the proposed algorithm has difficulty in handling the problem with a large number of pick-up points. A novel skyline based algorithm SkyRoute is also introduced for searching the optimal route which can service multiple cabs online. However, the skyline query is inefficient in handling , since it is processed online.
%

Yuan \emph{et al.} proposed a probability model for detecting pick-up points \cite{4}. It finds a route with the biggest pick-up probability to the parking position constrained by a distance threshold instead of the minimal cost of the route and provides location recommendation service both for the cab drivers and for the people needing the taxi services. In contrast, the problem solved in \cite{20, 24} is different from the MSR problem which is to recommend a fastest route to a destination place with starting position and time constraints.

Powell \emph{et al.} \cite{3} proposed a grid-based approach to suggest profit locations for taxi drivers by constructing a spatio-temporal profitability map, on which, the nearby regions of the driver are scored according to the potential profit calculated by the historical data. However, this method only finds a parking place with the biggest profit in a local scope instead of a set of pick-up points with overall consideration.

Lu \emph{et al}. \cite{10} introduced a problem of finding optimal trip route with time constraint. They also proposed an efficient trip planning method considering the current position of a user. However, their method uses the score of attractions to measure the preference of a route.

\section{PROPOSED METHOD}

To address the computational challenge of the generalized MSR problem, we first identify the iterative property of the PTD function,which makes the incremental generation of the potential sequences possible and then propose the pruning principle, which uses the iterative property to efficiently reduce the search space.

\subsection{The Iterative Property Of The PTD Function}
As described in section 2, the PTD function gives a computable measure for the cost of a route. In the following, we study the property of the PTD function.

Actually, an iterative computational formula of the PTD function \cite{5} can be obtained without considering the driving distance beyond the last pick-up point of a driving route. For this purpose, we introduce the concept of PTD sub-function.

\newtheorem{definition}{Definition}
\begin{definition}
\label{def6}
(PTD Sub-function $F1$) Given a driving route \(\vec d = \left\langle {{c_0},{c_1},{c_2}, \ldots ,{c_L}} \right\rangle \), \({c_0}\) is its starting point and \(\vec r = \left\langle {{c_1},{c_2}, \ldots ,{c_L}} \right\rangle \) is its pick-up sequence. Let the distance sub-vector of ${D(\vec d )}$ be
\begin{equation*}
\small
\widetilde {D(\vec d)} = \left\langle \begin{array}{l}
{D_{{c_0},{c_1}}},({D_{{c_0},{c_1}}} + {D_{{c_1},{c_2}}}), \ldots ,\sum\limits_{i=1}^L D_{c_{i-1} ,c_i}
\end{array} \right\rangle
\end{equation*}
and the probability sub-vector of ${P(\vec d)} $ be
\begin{equation*}
\small
\widetilde {P(\vec d)} = \left\langle \begin{array}{l}
P({c_1}),\overline {P({c_1})}  \cdot P({c_2}), \ldots ,\prod\limits_{i=1}^{L-1} \overline {P({c_i})} \cdot P(c_L )
\end{array} \right\rangle .
\end{equation*}
The PTD sub-function \(F1\) of the driving route $\vec d$ is defined as
\begin{equation}
\label{equ2}
{F1(\vec d) = \widetilde {D(\vec d)} \cdot \widetilde {P(\vec d)}}.
\end{equation}
\end{definition}

Compared to the distance vector ${D(\vec d)}$ and probability vector ${P(\vec d)}$, the sub-vectors $\widetilde {D(\vec d)}$ and $\widetilde {P(\vec d)}$ of a driving route $\vec d = \left\langle c_0, \vec r \right\rangle$ only lack the last component respectively. Therefore, the PTD cost of a driving route $\vec d$ can be expressed using its PTD subfunction by the following equation

\begin{equation}
\label{equ3}
F(\vec d) = F1(\vec d) + {D_\infty} \cdot \prod\limits_{i=1}^L \overline {P({c_i})}.
\vspace{-0.2cm}
\end{equation}

In fact, we do not have the starting point of a cab in the stage of offline processing. For enhancing the online search efficiency, we pre-compute the costs of all the potential sequences. The involved concept of probability summation function $PE$ is introduced as follows.
\begin{definition}
\label{def7}
(Probability Summation Function $PE$) Let \(\vec r = \left\langle {{c_1},{c_2}, \ldots ,{c_L}} \right\rangle \) be a potential sequence with length $L$ and $\vec d = \left\langle c_0, \vec r \right\rangle$ be a driving route derived from $\vec r$. The probability summation of $\vec r$ is the sum of all the dimensions in the probability sub-vector $\widetilde {P(\vec d)}$, and it is given as
\begin{equation}
\small
{PE(\vec r)}  = P({c_1}) + \overline {P({c_1})}  \cdot P({c_2}) +  \ldots  + \prod\limits_{i=1}^{L-1} \overline{P({c_i})}  \cdot P({c_L}).
\end{equation}
\end{definition}

Since the sum of all the components in the probability vector ${P(\vec d)}$ is equal to 1, the value of the probability summation function $PE$ of $\vec r$ has the following property.
\begin{equation}
\label{equ5}
PE(\vec r) = PE({c_1},{c_2}, \ldots ,{c_L}) = 1 - \prod\limits_{i=1}^L \overline {P({c_i})}.
\end{equation}
%
%
%
%

The value of the function $PE$ can be calculated recursively. Given a potential sequence \(\vec r_1 = \left\langle {{c_1},{c_2}, \ldots ,{c_k}} \right\rangle \) and its postfix  sub-sequence \(\vec r_2 = \left\langle {{c_2},{c_3}, \ldots ,{c_k}} \right\rangle \), \( PE(\vec r_1) \) can be iteratively calculated by

\begin{equation}
\label{equ10}
{PE(\vec r_1) = {P(c_1)} + \overline {P(c_1 )} \cdot PE(\vec r_2)}.
\end{equation}

According to the above definitions, we can obtain the iterative computation theorem of the potential sequences as follows.
\begin{theorem}
\label{theorem4}
Let \(\vec r = \left\langle {{c_1},{c_2}, \ldots ,{c_L}} \right\rangle \) be a potential sequence with length $L$. The distance sub-vector of $\vec r$ is \\
\vspace{-0.1cm}
\begin{equation*}
\small
{\widetilde {D(\vec r)} = \left\langle {{D_{{c_1},{c_2}}},({D_{{c_1},{c_2}}} + {D_{{c_2},{c_3}}}), \ldots ,\sum\limits_{i=2}^L {D_{{c_{i-1}},{c_i}}}} \right\rangle },
\end{equation*}
and its probability sub-vector is
\begin{equation*}
\small
{\widetilde {P(\vec r)} = \left\langle {P({c_2}),\overline {P({c_2})}  \cdot P({c_3}), \ldots ,\prod\limits_{i=2}^{L-1} \overline{P({c_i})}  \cdot P({c_L})} \right\rangle}.
\end{equation*}
Then the PTD sub-function \(F1\) of \(\vec r\) is
\vspace{-0.2cm}
\begin{equation*}
\small
{F1(\vec r) = \widetilde{D(\vec r)} \cdot \widetilde{P(\vec r)}}.
\end{equation*}

Given a potential sequence \(\vec r_1 = \left\langle {{c_1},{c_2}, \ldots ,{c_k}} \right\rangle \) $1 \le k \le N$ and its postfix  sub-sequence \(\vec r_2 = \left\langle {{c_2},{c_3}, \ldots ,{c_k}} \right\rangle \), the $F1(\vec r_1)$ can be iteratively calculated as
\begin{equation}
\label{equ7}
{F1(\vec r_1) = \overline {P(c_2 )}  \cdot F1(\vec r_2) + D_{c_1 ,c_2 }  \cdot {PE(\vec r_2)}}.
\end{equation}

\end{theorem}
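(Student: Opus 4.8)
The plan is to prove the iterative identity~\eqref{equ7} by expanding $F1(\vec r_1)=\widetilde{D(\vec r_1)}\cdot\widetilde{P(\vec r_1)}$ as a sum over its coordinates and regrouping so that the quantities attached to $\vec r_2$ factor out cleanly. Indexing each coordinate by the pick-up point it weights, I would write the components of the two sub-vectors of $\vec r_1$ as $d_j=\sum_{i=2}^{j}D_{c_{i-1},c_i}$ and $p_j=\prod_{i=2}^{j-1}\overline{P(c_i)}\cdot P(c_j)$ for $j=2,\ldots,k$, so that $F1(\vec r_1)=\sum_{j=2}^{k}d_j\,p_j$. The corresponding components of $\vec r_2$ are $d'_j=\sum_{i=3}^{j}D_{c_{i-1},c_i}$ and $p'_j=\prod_{i=3}^{j-1}\overline{P(c_i)}\cdot P(c_j)$ for $j=3,\ldots,k$, giving $F1(\vec r_2)=\sum_{j=3}^{k}d'_j\,p'_j$.

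Two structural relations carry the whole argument. First, every partial distance sum of $\vec r_1$ shares the common leading leg $D_{c_1,c_2}$, so that $d_j=D_{c_1,c_2}+d'_j$ for all $j$, where the empty-sum convention gives $d'_2=0$. Second, every cumulative probability weight of $\vec r_1$ beyond its first coordinate carries the common factor $\overline{P(c_2)}$, so that $p_j=\overline{P(c_2)}\,p'_j$ for $j\geq 3$. Substituting the first relation into $F1(\vec r_1)=\sum_{j=2}^{k}d_j p_j$ splits the sum as
\begin{equation*}
F1(\vec r_1)=D_{c_1,c_2}\sum_{j=2}^{k}p_j\;+\;\sum_{j=2}^{k}d'_j\,p_j.
\end{equation*}

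For the first sum, the weights $p_2,\ldots,p_k$ are the probabilities of first succeeding at $c_2,\ldots,c_k$, so together with the all-fail tail $\prod_{i=2}^{k}\overline{P(c_i)}$ they sum to $1$; hence $\sum_{j=2}^{k}p_j=1-\prod_{i=2}^{k}\overline{P(c_i)}=PE(\vec r_2)$ by~\eqref{equ5}, yielding the term $D_{c_1,c_2}\cdot PE(\vec r_2)$. For the second sum the $j=2$ contribution vanishes because $d'_2=0$, and for $j\geq 3$ the factorization $p_j=\overline{P(c_2)}\,p'_j$ lets me pull $\overline{P(c_2)}$ outside, leaving $\overline{P(c_2)}\sum_{j=3}^{k}d'_j\,p'_j=\overline{P(c_2)}\cdot F1(\vec r_2)$. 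Adding the two pieces gives exactly~\eqref{equ7}. There is no deep obstacle here: the identity is a routine algebraic rearrangement, and the only point needing care is the bookkeeping at the boundary index $j=2$, where the leading distance leg $D_{c_1,c_2}$ and the leading probability weight $P(c_2)$ must be separated from the recursive tail (and where one must respect that $PE(\vec r_2)$ of Definition~\ref{def7} sums the weights starting at $P(c_2)$, not at $P(c_3)$). The minimal case $k=2$, where $\vec r_2=\langle c_2\rangle$, $F1(\vec r_2)=0$, and $PE(\vec r_2)=P(c_2)$, serves as a sanity check that these boundary terms recombine to $F1(\vec r_1)=D_{c_1,c_2}\,P(c_2)$.
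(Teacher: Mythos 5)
Your proof is correct and takes essentially the same route as the paper's: both expand $F1(\vec r_1)=\widetilde{D(\vec r_1)}\cdot\widetilde{P(\vec r_1)}$ termwise, split each partial distance sum into the common leading leg $D_{c_1,c_2}$ plus the corresponding partial sum for $\vec r_2$, recognize the resulting probability sum as $PE(\vec r_2)$, and factor $\overline{P(c_2)}$ out of the remainder to recover $F1(\vec r_2)$. Your indexed bookkeeping with the convention $d'_2=0$ is just a more explicit rendering of the paper's term-by-term regrouping.
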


\begin{proof}
Based on the definition of the PTD sub-function $F1$, we have
\scriptsize
\[
\begin{array}{l}
 F1(\vec r_1 ) = \widetilde{D(\vec r_1 )} \cdot \widetilde{P(\vec r_1 )} \\
  = D_{c_1 ,c_2 }  \cdot P(c_2 ) + \left( {D_{c_1 ,c_2 }  + D_{c_2 ,c_3 } } \right) \cdot \overline {P(c_2 )}  \cdot P(c_3 ) +  \cdots  + \sum\limits_{i = 2}^k {D_{c_{i - 1} ,c_i }  \cdot \prod\limits_{i = 2}^{k - 1} {\overline {P(c_i )}  \cdot P(c_k )} }  \\
  = D_{c_2 ,c_3 }  \cdot \overline {P(c_2 )}  \cdot P(c_3 ) +  \cdots  + \sum\limits_{i = 3}^k {D_{c_{i - 1} ,c_i }  \cdot \prod\limits_{i = 2}^{k - 1} {\overline {P(c_i )}  \cdot P(c_k )} } + D_{c_1 ,c_2 }  \cdot \left( {P(c_2 ) + \overline {P(c_2 )}  \cdot P(c_3 ) +  \cdots  + \prod\limits_{i = 2}^{k - 1} {\overline {P(c_i )}  \cdot P(c_k )} } \right) \\
  = \overline {P(c_2 )}  \cdot \left( {D_{c_2 ,c_3 }  \cdot P(c_3 ) +  \cdots  + \sum\limits_{i = 3}^k {D_{c_{i - 1} ,c_i }  \cdot \prod\limits_{i = 3}^{k - 1} {\overline {P(c_i )}  \cdot P(c_k )} } } \right)  + D_{c_1 ,c_2 }  \cdot \left( {P(c_2 ) + \overline {P(c_2 )}  \cdot P(c_3 ) +  \cdots  + \prod\limits_{i = 2}^{k - 1} {\overline {P(c_i )}  \cdot P(c_k )} } \right) \\
  = \overline {P(c_2 )}  \cdot F1(\vec r_2 ) + D_{c_1 ,c_2 }  \cdot PE(\vec r_2 )
 \end{array}
\]
\end{proof}

According to the Formulas~\ref{equ10} and \ref{equ7}, we can get the backward recursive formula for calculating the PTD sub-function $F1$ of the potential sequence \(\vec r\).

\begin{center}
\begin{tabular}{|l|} \hline
\textbf{The initial value:}\\
\(\forall c \in C\), \(F1(c) = 0\), \(PE(c) = P(c)\)\\
\textbf{Iterative formula:}\\
\(F1({c_1}, {c_2}, \ldots , {c_L}) = \overline {P({c_2})} \cdot F1({c_2}, {c_3}, \ldots , {c_L}) \) \(+ {D_{{c_1},{c_2}}} \cdot PE({c_2}, {c_3}, \ldots, {c_L})\)\\
\(PE({c_1}, {c_2}, \ldots , {c_L}) = P({c_1}) + \overline {P({c_1})} \cdot PE({c_2}, \ldots , {c_L})\)\\
\hline
\end{tabular}
\end{center}

The recursive formula given above shows that the \(F1\) value of the potential sequence \(\vec r = \left\langle {{c_1},{c_2}, \ldots ,{c_L}} \right\rangle \) can be recursively calculated by the \(F1\) and $PE$ values of its postfix sub-sequence \(\vec r' = \left\langle {{c_2},{c_3}, \ldots ,{c_L}} \right\rangle \). In the stage of offline analysis, we only have the set of potential pick-up points \(C\), but the locations of the cabs are unknown. Therefore, we can construct short postfix sequences and then incrementally add new pick-up points ahead of them, and this will lead to longer potential sequences. Actually, if we want to recommend a driving route with length $L$, we need to generate all potential sequences with length $L$. Once we get the current location ${c_0}$ of a cab online, we can obtain the driving routes satisfying the length constraint by inserting the current location of the cab ${c_0}$ to the head of the potential sequences with length $L$ as the starting point.

The PTD sub-function of the driving route $\vec d = \left\langle c_0, \vec r \right\rangle$ can be calculated using the values of \(F1(\vec r)\) and $PE(\vec r)$ via
\begin{equation}
\label{equ5}
F1(\vec d) = \overline {P({c_1})} \cdot F1(\vec r) + D_{{c_0},{c_1}} \cdot PE(\vec r).
\end{equation}

By combining Formula~\ref{equ3} with Formula~\ref{equ5}, the PTD cost of the driving route $\vec d$ can be calculated using the formula
\begin{equation}
\footnotesize
\label{equ6}
\begin{array}{l}
F(\vec d) = \overline {P({c_1})} \cdot F1(\vec r) + {{\rm{D}}_{{c_0},{c_1}}} \cdot PE(\vec r) + {D_\infty} \cdot \prod_{i=1}^L \overline {P({c_i})}\\
\quad \quad \; \, = \overline {P({c_1})}  \cdot F1(\vec r) + {D_{{c_0},{c_1}}} \cdot PE(\vec r) + {D_\infty} \cdot (1 - PE(\vec r)).
\end{array}
\end{equation}

%

For the driving route \(\vec d = \left\langle {{c_0},{c_1},{c_2}, \ldots ,{c_L}} \right\rangle \) with length \(L\), we can efficiently calculate the value of the PTD sub-function $F1$ of its pick-up sequence \(\vec r = \left\langle {{c_1},{c_2}, \ldots ,{c_L}} \right\rangle \) in advance. When the current location \({c_0}{\rm{ }}\) of the cab is received online, we can calculate the PTD value of \(\vec d\) based on Formula~\ref{equ6}. Then we can recommend the driving route satisfying the length constraint with the minimum PTD cost to the user.

Using the iterative property of the PTD function, we give a recursive computational formula for the PTD cost as well as an incremental backward path growth method which can generate a potential sequence from its postfix sub-sequence. In this way, we do not have to calculate the PTD cost for each possible driving route from scratch, but recursively calculate it from the $F1$ and $PE$ values of its postfix sub-sequences. Therefore, the cost of calculating the PTD of the routes can be reduced significantly.

\section{SEQUENCE PRUNING}

In [2], Ge \emph{et al.} proposed a sequence pruning algorithm LCP based on route dominance. Let us briefly illustrate the principle of route dominance based pruning used in algorithm LCP. In Figure 2, two potential sequences with length three \(\vec r_1 = \left\langle {c_1, c_2, c_5 } \right\rangle\) and  \(\vec r_2 = \left\langle {c_1, c_4, c_5 } \right\rangle\) have the same source and destination pick-up points. The associated $DP$ vectors are defined as \(DP(\vec r_1) = \left\langle {{D_{{c_1},{c_2}}},\overline {P({c_2})} ,{D_{{c_2},{c_5}}},\overline {P({c_5})} } \right\rangle \) and \(DP(\vec r_2) = \left\langle {{D_{{c_1},{c_4}}},\overline {P({c_4})} ,{D_{{c_4},{c_5}}},\overline {P({c_5})} } \right\rangle \). Because \(\left( {{D_{{c_1},{c_2}}} \le {D_{{c_1},{c_4}}}} \right) \wedge \left( {\overline {P({c_2})}  \le \overline {P({c_4})} } \right) \wedge \left( {{D_{{c_2},{c_5}}} \le {D_{{c_4},{c_5}}}} \right) \wedge \left( {\overline {P({c_5})}  \le \overline {P({c_5})} } \right)\) and \(\left( {{D_{{c_1},{c_2}}} < {D_{{c_1},{c_4}}}} \right) \vee \left( {\overline {P({c_2})}  < \overline {P({c_4})} }\right) \vee \left( {{D_{{c_2},{c_5}}} < {D_{{c_4},{c_5}}}} \right) \vee \left( {\overline {P({c_5})}  < \overline {P({c_5})} } \right) \) are both valid, we can infer that \(\vec r_1\) dominates \(\vec r_2\). Thus, \(\vec r_2\) will be pruned in advance by the algorithm LCP.

In algorithms LCP, all possible potential sequences should be generated. Since a route being dominated by another route depends on the value of each dimension of the DP vector, the pruning effect is not high. If we can identify and remove some non-optimal potential sequences incrementally in the stage of sequence generation, the pruning effect would be improved. Along this line, we introduce the sequence pruning principle adopted in our method.

\begin{definition}
\label{def10}
(Sequence Precedence) Given two potential sequences \({\vec a} = \left\langle {{c_{{a_1}}}, \ldots ,{c_{{a_k}}}} \right\rangle \) and \({\vec b} = \left\langle {{c_{{b_1}}}, \ldots ,{c_{{b_k}}}} \right\rangle \) with equal length $k$ (\(1 \le k \le N\)), for a starting position \({c_0}\), we will get two driving routes \({\vec d_1} = \left\langle {{c_0},{c_{{a_1}}}, \ldots ,{c_{{a_k}}}} \right\rangle \) and \({\vec d_2} = \left\langle {{c_0},{c_{{b_1}}}, \ldots ,{c_{{b_k}}}} \right\rangle \) with equal length $k$ derived from \(\vec a\) and \(\vec b\) respectively. If \(F({\vec d_1}) < F({\vec d_2})\) holds for any possible \({c_0}\), then \(\vec a\) precedes \(\vec b\), and it is denoted as \(\vec a \prec \vec b\).
\end{definition}
\begin{figure}[!t]
\centering
\includegraphics[width=3.2in]{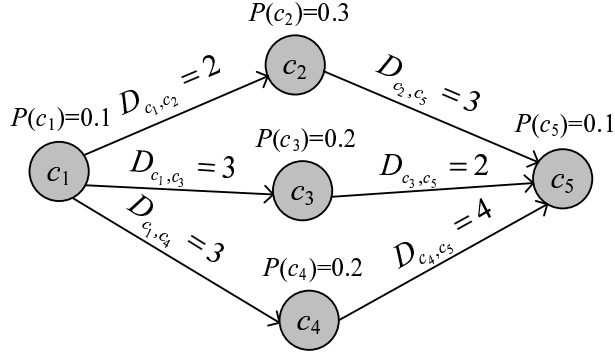}
\caption{An example of the sequence dominance and precedence.\label{fig2}}
\end{figure}

If \(\vec a \prec \vec b\), the potential sequence \(\vec b\) cannot form an optimal driving route and it should be removed from the collection of the sequence candidates in advance. For example, as shown in Figure~\ref{fig2}, there is another potential sequence $\vec r_3 = \left\langle {c_1, c_3 ,c_5} \right\rangle$. Since for any possible starting position \({c_0}\) the PTD value of the driving route $\vec d_1 = \left\langle {c_0, c_1 ,c_2 ,c_5 } \right\rangle $ must be smaller than that of the driving route $\vec d_3 = \left\langle {c_0, c_1 ,c_3 ,c_5 } \right\rangle $, \(\vec d_3\) is not an optimal driving route. Therefore, we can prune the potential sequence \(\vec r_3\) in the stage of offline processing.

It is easy to see that if \(\vec r\) dominates \(\vec r'\), then \(\vec r \prec \vec r'\) is also valid. On the contrary, if \(\vec r \prec \vec r'\), \(\vec r\) does not necessarily dominate \(\vec r'\). For example, as shown in Figure~\ref{fig2}, even though \(\vec r_1\) does not dominate \(\vec r_3\), \(\vec r_1 \prec \vec r_3\) is still valid. It shows that sequence dominance is only a special case of sequence precedence. As a result, the overall pruning effect of sequence precedence should be better than that of route dominance.

In order to efficiently evaluate the costs of the potential sequences, we provide a criterion of iterative precedence as follows.

\begin{definition}
\label{def11}
(Iterative Precedence) Let \(\vec a =  \left\langle {c_{{a_1}}}, \ldots ,{c_{{a_k}}} \right\rangle \) and \(\vec b =  \left\langle {c_{{b_1}}}, \ldots ,{c_{{b_k}}} \right\rangle \) be two potential sequences derived from the set of pick-up points $C$. If \(\left( {F1(\vec a) \le F1(\vec b)} \right) \wedge \left( {1 - PE(\vec a) < 1 - PE(\vec b)} \right)\) or \(\left( {F1(\vec a) < F1(\vec b)} \right) \wedge \left( {1 - PE(\vec a) \le 1 - PE(\vec b)} \right)\), then \(\vec a\) takes iterative precedence over \(\vec b\), denoted by \(\vec a \propto \vec b\).
\end{definition}

According to the definition of iterative precedence, we propose a method to determine the precedence relationship between pairs of potential sequences in order to prune some sequence candidates in the process of sequence generation. Note that since the PTD function $F$ and the iterative calculation of the PTD sub-function $F1$ are both relevant to the source point of the sequence, we only compare the PTD costs of the potential sequences with the same source point.

\begin{theorem}
\label{theorem5}
Let $1 \le k \le N-1$, $\vec a = \left\langle {{c_s},{c_{{a_1}}} \ldots ,{c_{{a_k}}}} \right\rangle$ and $\vec b = \left\langle {{c_s},{c_{{b_1}}} \ldots ,{c_{{b_k}}}} \right\rangle$ be two potential sequences with the same source point and the equal length $k+1$. \(\vec a' = \left\langle {{c_1},{c_2}, \ldots ,{c_m},{c_s},{c_{{a_1}}}, \ldots ,{c_{{a_k}}}} \right\rangle \) and \(\vec b' = \left\langle {{c_1},{c_2}, \ldots ,{c_m},{c_s},{c_{{b_1}}}, \ldots ,{c_{{b_k}}}} \right\rangle \) are two potential sequences derived from \(\vec a\) and \(\vec b\) by appending the same prefix sequence \(\vec r = \left\langle {c_1},{c_2}, \ldots ,{c_m} \right\rangle \) (\(0 \le m \le N - k - 1\), \({c_m} \in C\)) respectively. If \(\vec a \propto \vec b\), then \(\vec a' \prec \vec b'\).
\end{theorem}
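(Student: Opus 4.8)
The plan is to unfold the definition of sequence precedence (Definition~\ref{def10}) and reduce the claim $\vec a'\prec\vec b'$ to a single sign computation. I must show $F(\langle c_0,\vec a'\rangle) < F(\langle c_0,\vec b'\rangle)$ for every admissible starting point $c_0$. The first step is to apply Formula~\ref{equ6} to the two driving routes $\vec d_1=\langle c_0,\vec a'\rangle$ and $\vec d_2=\langle c_0,\vec b'\rangle$. Since $\vec a'$ and $\vec b'$ share the same source (the first prefix point $c_1$ when $m\ge1$, and $c_s$ when $m=0$) and the same $D_\infty$, subtracting gives
\[
F(\vec d_1)-F(\vec d_2) = \overline{P(c_1)}\bigl(F1(\vec a')-F1(\vec b')\bigr) + \bigl(D_{c_0,c_1}-D_\infty\bigr)\bigl(PE(\vec a')-PE(\vec b')\bigr),
\]
so the whole problem reduces to controlling these two bracketed differences in terms of $F1(\vec a)-F1(\vec b)$ and $PE(\vec a)-PE(\vec b)$, which are exactly the quantities constrained by the hypothesis $\vec a\propto\vec b$.

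The second step evaluates those differences. The $PE$ difference is immediate from the identity $PE(\vec r)=1-\prod_i\overline{P(c_i)}$: writing $Q=\prod_{j=1}^m\overline{P(c_j)}$ for the shared-prefix product, one gets $1-PE(\vec a')=Q\,(1-PE(\vec a))$, hence $PE(\vec a')-PE(\vec b')=Q\,(PE(\vec a)-PE(\vec b))$. The $F1$ difference is the delicate part: applying the iterative Formula~\ref{equ7} repeatedly, prepending $c_m,c_{m-1},\ldots,c_1$ in turn, I would track the induced difference. Because $\vec a$ and $\vec b$ begin with the same point $c_s$ and carry the identical prefix, the recursion for the difference has common coefficients and telescopes to $F1(\vec a')-F1(\vec b') = \tfrac{Q\,\overline{P(c_s)}}{\overline{P(c_1)}}\bigl(F1(\vec a)-F1(\vec b)\bigr) + g\,\bigl(PE(\vec a)-PE(\vec b)\bigr)$, with $g\ge0$ and $\overline{P(c_1)}\,g = Q\cdot\mathrm{dist}(c_1\!\to\!c_s)$. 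Equivalently, and more transparently, one can use the expected-distance reading of the PTD and condition on whether a pickup occurs inside the shared prefix: with probability $Q$ the cab reaches $c_s$ having travelled the common distance $S=D_{c_0,c_1}+\mathrm{dist}(c_1\!\to\!c_s)$, after which the tail $\vec a$ governs the rest, producing the same bookkeeping directly. Substituting both differences collapses the expression to
\[
F(\vec d_1)-F(\vec d_2) = Q\,\overline{P(c_s)}\bigl(F1(\vec a)-F1(\vec b)\bigr) + Q\,(S-D_\infty)\bigl(PE(\vec a)-PE(\vec b)\bigr).
\]

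Finally I would read off the sign. By $\vec a\propto\vec b$ we have $F1(\vec a)-F1(\vec b)\le0$ and $PE(\vec a)-PE(\vec b)\ge0$ with at least one inequality strict. Under the standing modelling assumptions $0<P(c_i)<1$ (so $Q>0$ and $\overline{P(c_s)}>0$) and that the cumulative travel distance to a pickup point never exceeds the maximum cruising distance (so $S\le D_\infty$, with $c_0$ within range), both summands are $\le0$; a short case split on which of the two iterative-precedence inequalities is strict then shows the sum is strictly $<0$ for every admissible $c_0$, which is precisely $\vec a'\prec\vec b'$. The case $m=0$ is the base instance ($Q=1$, $S=D_{c_0,c_s}$) and is covered by the same formula.

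The main obstacle is the second step: $F1$ ordering is \emph{not} preserved by prepending on its own, since each prepended point feeds the nonnegative quantity $PE(\vec a)-PE(\vec b)$ back into $F1$ with the opposite sign to $F1(\vec a)-F1(\vec b)$, so one cannot simply argue $\vec a'\propto\vec b'$. The crux is that this spurious positive contribution is cancelled, once the trailing $D_\infty$ term is accounted for, by the $-D_\infty$ in the coefficient of $PE(\vec a')-PE(\vec b')$, leaving the clean coefficient $Q\,(S-D_\infty)\le0$. Pinning down this cancellation, i.e.\ proving $\overline{P(c_1)}\,g = Q\cdot\mathrm{dist}(c_1\!\to\!c_s)$ (or justifying the conditional-expectation decomposition), is where the real work lies, and it is also the point at which the assumption $S\le D_\infty$ is genuinely used.
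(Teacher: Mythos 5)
Your proof is correct and lands on exactly the same final identity as the paper, but it gets there by a genuinely different derivation. The paper never touches the recursion: it expands $F(\overrightarrow{d_a})$ and $F(\overrightarrow{d_b})$ in closed form by splitting each route at $c_s$ (conditioning on whether the pickup occurs inside the shared prefix), writing $F(\overrightarrow{d_a}) = F1(\vec d) + D_0\cdot\overline{P_0} + \overline{P_0}\bigl(F1(\vec a)\cdot\overline{P(c_s)} + (D_\infty - D_0)(1-PE(\vec a))\bigr)$ with $\vec d = \left\langle c_0, c_1,\ldots,c_m \right\rangle$, $D_0 = D_{c_0,c_1}+\cdots+D_{c_m,c_s}$ and $\overline{P_0} = \prod_{j=1}^m \overline{P(c_j)}$ (your $S$ and $Q$), and subtracting the two closed forms. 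You instead apply the online evaluation Formula~\ref{equ6} at the new source $c_1$ and telescope the backward recursion Formula~\ref{equ7} through the prefix; your telescoped coefficients check out, since induction on the prepended points gives $F1(\vec a')-F1(\vec b') = \prod_{j=2}^m\overline{P(c_j)}\cdot\overline{P(c_s)}\bigl(F1(\vec a)-F1(\vec b)\bigr) + \prod_{j=2}^m\overline{P(c_j)}\cdot\bigl(D_{c_1,c_2}+\cdots+D_{c_m,c_s}\bigr)\bigl(PE(\vec a)-PE(\vec b)\bigr)$, which is precisely your claim $\overline{P(c_1)}\,g = Q\cdot\mathrm{dist}(c_1\to c_s)$, and substitution collapses to the paper's difference $\overline{P_0}\,\overline{P(c_s)}\bigl(F1(\vec a)-F1(\vec b)\bigr) + \overline{P_0}(D_\infty-D_0)\bigl(PE(\vec b)-PE(\vec a)\bigr)$. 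What your route buys: it makes explicit the point you flag, that prepending does \emph{not} preserve the $F1$-ordering (so one cannot shortcut via $\vec a'\propto\vec b'$), and it shows exactly how the $PE$-feedback into $F1$ is cancelled against the $-D_\infty$ coefficient; the paper's direct expansion is shorter and avoids the induction, but hides that cancellation. One small correction to your sign step: you need the \emph{strict} inequality $S < D_\infty$, not merely $S\le D_\infty$ — in the branch of Definition~\ref{def11} where $F1(\vec a)=F1(\vec b)$ and $PE(\vec a)>PE(\vec b)$, your whole difference reduces to $Q\,(S-D_\infty)\bigl(PE(\vec a)-PE(\vec b)\bigr)$, which vanishes if $S=D_\infty$; the paper asserts exactly this strict form, $D_\infty > D_0$, on the grounds that the desired cruising distance exceeds any cumulative route distance, and, like you, it implicitly uses $0<P(c_i)<1$ so that $\overline{P_0}\,\overline{P(c_s)}>0$ in the other branch.
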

\begin{proof}
Let \({c_0}\) be an arbitrary starting point, \(\overrightarrow {{d_a}}  = \left\langle {{c_0},{c_1},{c_2}, \ldots ,{c_m},{c_s},{c_{{a_1}}}, \ldots ,{c_{{a_{_k}}}}} \right\rangle \) and  \(\overrightarrow {{d_b}}  = \left\langle {{c_0},{c_1},{c_2}, \ldots ,{c_m},{c_s},{c_{{b_1}}}, \ldots ,{c_{{b_k}}}} \right\rangle \) be two driving routes associated with the potential sequences \(\vec a'\) and \(\vec b'\) respectively. \(\vec d = \left\langle {{c_0},{c_1},{c_2}, \ldots ,{c_m}} \right\rangle \) is a driving route with pick-up point sequence \(\vec r = \left\langle {c_1},{c_2}, \ldots ,{c_m} \right\rangle \).

Let ${D_0} = {D_{{c_0},{c_1}}} + {D_{{c_1},{c_2}}} + {D_{{c_2},{c_3}}} +  \ldots {\rm{ + }}{D_{{c_{m}},{c_s}}}$ and $\overline {{P_0}}  = \overline {P({c_1})}  \cdot \overline {P({c_2})}  \cdot \overline {P({c_3})} \cdot  \ldots  \cdot \overline {P({c_m})} $, then
\[
\begin{array}{l}
F{\rm{(}}\overrightarrow {{d_a}} {\rm{) = }}F1(\vec d){\rm{ + }}{D_0} \cdot \overline {{P_0}}  + ({D_\infty } - {D_0}) \cdot \overline {{P_0}}  \cdot \overline {P({c_s})}  \cdot \overline {P({c_{{a_1}}})} \cdot \overline {P({c_{{a_2}}})}  \cdot  \ldots  \cdot \overline {P({c_{{a_k}}})}  + F1(\vec a) \cdot \overline {P_0}  \cdot \overline {P({c_s})}, \\
F{\rm{(}}\overrightarrow {{d_b}} {\rm{) = }}F1(\vec d){\rm{ + }}{D_0} \cdot \overline {{P_0}}  + ({D_\infty } - {D_0}) \cdot \overline {{P_0}}  \cdot \overline {P({c_s})}  \cdot \overline {P({c_{{b_1}}})}  \cdot \overline {P({c_{{b_2}}})}  \cdot  \ldots  \cdot \overline {P({c_{{b_k}}})}  + F1(\vec b) \cdot \overline {{P_0}}  \cdot \overline {P({c_s})}.
\end{array}
\]

As we know,
\begin{equation*}
\overline {P({c_s})}  \cdot \overline {P({c_{{a_1}}})}  \cdot \overline {P({c_{{a_2}}})}  \cdot  \ldots  \cdot \overline {P({c_{{a_k}}})} =1 - PE(\vec a),
\end{equation*}
\begin{equation*}
\overline {P({c_s})}  \cdot \overline {P({c_{{b_1}}})}  \cdot \overline {P({c_{b2}})}  \cdot  \ldots  \cdot \overline {P({c_{{b_k}}})} = 1 - PE(\vec b).
\end{equation*}
Then
\begin{equation*}
\begin{array}{l}
F{\rm{(}}\overrightarrow {{d_a}} {\rm{) = }}F1(\vec d){\rm{ + }}{{\rm{D}}_0} \cdot \overline {{P_0}} {\rm{ + }}\overline {{P_0}}  \cdot (F1(\vec a) \cdot \overline {{\rm{P}}({c_s})} + ({D_\infty } - {D_0}) \cdot (1 - PE(\vec a))),\\
\end{array}
\end{equation*}

\begin{equation*}
\begin{array}{l}
F{\rm{(}}\overrightarrow {{d_b}} {\rm{) = }}F1(\vec d){\rm{ + }}{{\rm{D}}_0} \cdot \overline {{P_0}} {\rm{ + }}\overline {{P_0}}  \cdot (F1(\vec b) \cdot \overline {P({c_s})} + ({D_\infty } - {D_0}) \cdot (1 - PE(\vec b))).
 \end{array}
\end{equation*}

So,
\vspace{-0.1cm}
\begin{equation*}
\begin{array}{l}
 F{\rm{(}}\overrightarrow {d_a } {\rm{)}} - F{\rm{(}}\overrightarrow {d_b } {\rm{)}} = \overline {P_0 }  \cdot \overline {P(c_s )}  \cdot (F1(\vec a) - F1(\vec b)) +  \overline {P_0 }  \cdot (D_\infty   - D_0 ) \cdot (PE(\vec b) - PE(\vec a))). \\
 \end{array}
\end{equation*}
Since the desired travel distance increases along with the length of suggested driving routes, we can get \({D_\infty } > {D_0}\). Thus, if \(\vec a \propto \vec b\), i.e., \(\left( {F1(\vec a) \le F1(\vec b)} \right) \wedge \left( {1 - PE(\vec a) < 1 - PE(\vec b)} \right)\) or \(\left( {F1(\vec a) < F1(\vec b)} \right) \wedge \left( {1 - PE(\vec a) \le 1 - PE(\vec b)} \right)\), then \(F(\overrightarrow {{d_a}} ) < F(\overrightarrow {{d_b}} )\). That is to say, \(\vec a' \prec \vec b'\).
\end{proof}
According to the feature of the precedence relationship between the potential sequences, we introduce the theory of batch and incremental sequence pruning as follows.
\begin{corollary}
\label{corollary1}
(Batch Pruning) Given two potential sequences \(\vec a = \left\langle {{c_s},{c_{{a_1}}}, \ldots ,{c_{{a_k}}}} \right\rangle \) and \(\vec b = \left\langle {{c_s},{c_{{b_1}}}, \ldots ,{c_{{b_k}}}} \right\rangle \) with the equal length $k+1 (1 \le k \le N-1)$ and the same source point $c_s$, if \(\vec a \propto \vec b\), then \(\vec a \prec \vec b\).
\end{corollary}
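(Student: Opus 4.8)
The plan is to recognize that this corollary is simply the degenerate, prefix-free instance of Theorem~\ref{theorem5}, so almost no new work is required. First I would set $m=0$ in the statement of Theorem~\ref{theorem5}. The admissible range there is $0 \le m \le N-k-1$, and since the corollary assumes $k \le N-1$ we have $N-k-1 \ge 0$, so $m=0$ is always permitted. With $m=0$ the appended prefix $\vec r = \left\langle c_1, c_2, \ldots, c_m \right\rangle$ is empty, whence the derived sequences collapse to $\vec a' = \vec a$ and $\vec b' = \vec b$. The conclusion $\vec a' \prec \vec b'$ of Theorem~\ref{theorem5} then reads verbatim as $\vec a \prec \vec b$, which is exactly the claim. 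Under the hypothesis $\vec a \propto \vec b$ this is immediate.

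To keep the argument self-contained I would alternatively re-derive the inequality directly from Formula~\ref{equ6}. Fix an arbitrary starting point $c_0$ and write $\vec d_1 = \left\langle c_0, \vec a \right\rangle$ and $\vec d_2 = \left\langle c_0, \vec b \right\rangle$; since $\vec a$ and $\vec b$ share the same source point $c_s$, the two shared additive terms $D_{c_0,c_s}\cdot PE(\cdot) + D_\infty \cdot (1 - PE(\cdot))$ combine and cancel into the compact difference
\[
F(\vec d_1) - F(\vec d_2) = \overline{P(c_s)} \cdot \left( F1(\vec a) - F1(\vec b) \right) + \left( D_\infty - D_{c_0,c_s} \right) \cdot \left( PE(\vec b) - PE(\vec a) \right).
\]
This is precisely the difference obtained at the end of the proof of Theorem~\ref{theorem5} once one specializes $\overline{P_0}=1$ and $D_0 = D_{c_0,c_s}$, confirming the two routes are consistent.

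I would then invoke the monotonicity fact $D_\infty > D_{c_0,c_s}$, which holds because $D_\infty$ is the desired maximum cruising distance and hence dominates any single inter-point distance. Under either disjunct defining $\vec a \propto \vec b$ the first term is nonpositive (as $F1(\vec a) \le F1(\vec b)$ and $\overline{P(c_s)} \ge 0$) and the second is nonpositive (as $PE(\vec a) \ge PE(\vec b)$ and $D_\infty - D_{c_0,c_s} > 0$), with at least one strictly negative, so $F(\vec d_1) < F(\vec d_2)$; since $c_0$ is arbitrary, Definition~\ref{def10} gives $\vec a \prec \vec b$. The main obstacle is not the algebra but the sign bookkeeping in the boundary case $\overline{P(c_s)} = 0$: there the first term vanishes, so strictness must come from the second disjunct of $\propto$, and one must rely on the standing assumption that pick-up probabilities satisfy $\overline{P(c_s)} > 0$ (together with the initial-value conventions $F1(c)=0$, $PE(c)=P(c)$ that make $F1$ and $PE$ well defined) to guarantee the strict inequality.
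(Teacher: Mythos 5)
Your proposal is correct and takes essentially the same route as the paper: the paper offers no separate proof of Corollary~\ref{corollary1} precisely because it is the $m=0$ instance of Theorem~\ref{theorem5} (whose stated range $0 \le m \le N-k-1$ admits the empty prefix whenever $k \le N-1$), which is your primary argument, and your self-contained derivation is just that theorem's proof specialized to $\overline{P_0}=1$ and $D_0=D_{c_0,c_s}$. Your closing remark on the degenerate case $\overline{P(c_s)}=0$ (where both disjuncts of $\propto$ collapse and only equality of PTD costs can be concluded) flags an implicit assumption that the paper itself glosses over, so you are if anything more careful than the original.
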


The above corollary shows that if \(\vec a \propto \vec b\), the driving route $\left\langle {{c_0},\vec b} \right\rangle$ derived from the potential sequence \(\vec b\) is not an optimal driving route. Thus, \(\vec b\) should be pruned from the sequence candidates with length \(k + 1\).

\begin{figure}[!t]
\centering
\includegraphics[width=3.2in]{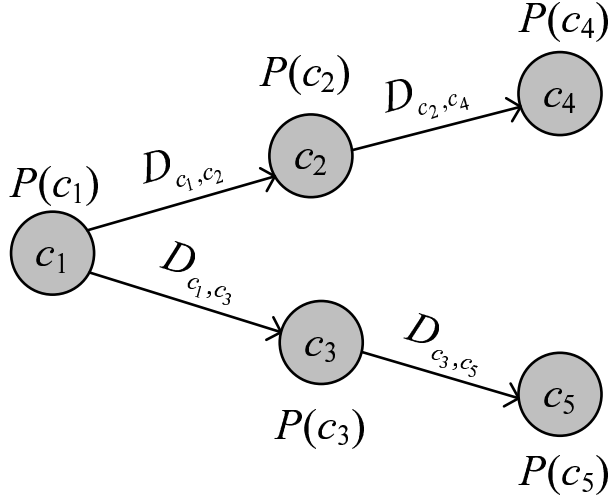}
\caption{An example of batch sequence pruning.}
\end{figure}

In the batch pruning, we can compare the PTD cost among potential sequences with length \( L(2 \le L \le N)\) using the values of $F1$ and $PE$ calculated in the iterative process. However, the batch pruning cannot be applied during the process of incremental sequence generation. For example, as shown in Figure 3, there are two potential sequences \(\vec r = \left\langle {c_1}, {c_2}, {c_4} \right\rangle\) and \(\vec r' = \left\langle {c_1}, {c_3}, {c_5} \right\rangle\). We can generate a new potential sequence \(\left\langle {c_3}, {c_1}, {c_2}, {c_4} \right\rangle\) considering \(\vec r\) as its postfix. However, we cannot append \({c_3}\) ahead of \(\vec r'\), because the pick-up point \({c_3}\) has existed in \(\vec r'\). Even though \(\vec r \propto \vec r'\), we can not prune \(\vec r'\) in advance in the process of the incremental backward path growth. For example, if \(\left\langle {c_2}, {c_1}, {c_3}, {c_5} \right\rangle \prec \left\langle {c_3}, {c_1}, {c_2}, {c_4} \right\rangle\), we may miss the optimal route for the improper pruning of \(\vec r'\) in advance. Along this line, we proposed a new corollary suitable for pruning potential sequences incrementally.

\begin{corollary}
\label{corollary2}
(Incremental Pruning) Given two potential sequences \(\vec a = \left\langle {{c_s},{c_{{a_1}}}, \ldots ,{c_{{a_k}}}} \right\rangle \) and \(\vec b = \left\langle {{c_s},{c_{{b_1}}}, \ldots ,{c_{{b_k}}}} \right\rangle \) with the equal length $k+1 (2 \le k \le N-1)$ and the same source point $c_s$, if \(\{ {c_{{a_1}}}, \ldots ,{c_{{a_{_k}}}}\}  = \{ {c_{{b_1}}}, \ldots ,{c_{{b_{_k}}}}\} \) and \(F1(\vec a) < F1(\vec b)\), then  all the driving routes having the postfix sub-sequence \(\vec b\) cannot be an optimal driving route.
\end{corollary}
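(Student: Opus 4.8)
The plan is to reduce everything to the iterative-precedence machinery already established in Definition~\ref{def11} and Theorem~\ref{theorem5}, exploiting the crucial fact that $\vec a$ and $\vec b$ share both their source point and the remainder of their pick-up set.

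First I would observe that the hypothesis $\{c_{a_1},\ldots,c_{a_k}\} = \{c_{b_1},\ldots,c_{b_k}\}$ together with the common source $c_s$ forces $C_{\vec a} = C_{\vec b}$. Because the probability summation satisfies $PE(\vec r) = 1 - \prod_{i} \overline{P(c_i)}$, and this product depends only on the \emph{set} of pick-up points and not on their order, I obtain $PE(\vec a) = PE(\vec b)$, hence $1 - PE(\vec a) = 1 - PE(\vec b)$. Combined with the assumed strict inequality $F1(\vec a) < F1(\vec b)$, this matches exactly the second clause of Definition~\ref{def11}, namely $\left(F1(\vec a) < F1(\vec b)\right) \wedge \left(1 - PE(\vec a) \le 1 - PE(\vec b)\right)$. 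Therefore $\vec a \propto \vec b$.

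Next I would take an arbitrary driving route whose pick-up sequence has $\vec b$ as a postfix; by definition its pick-up sequence is $\vec b' = \langle c_1, \ldots, c_m, c_s, c_{b_1}, \ldots, c_{b_k} \rangle$ for some prefix $\vec r = \langle c_1, \ldots, c_m \rangle$ with $0 \le m \le N-k-1$. I would form the sibling sequence $\vec a' = \langle c_1, \ldots, c_m, c_s, c_{a_1}, \ldots, c_{a_k} \rangle$ obtained by prepending the same prefix to $\vec a$. The set equality $C_{\vec a} = C_{\vec b}$ guarantees that a prefix point collides with $\vec a$ exactly when it collides with $\vec b$, so $\vec a'$ is a valid potential sequence of the same length whenever $\vec b'$ is. Since $\vec a \propto \vec b$, Theorem~\ref{theorem5} then yields $\vec a' \prec \vec b'$, i.e. $F(\langle c_0, \vec a' \rangle) < F(\langle c_0, \vec b' \rangle)$ for every starting point $c_0$. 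Thus every route carrying $\vec b$ as a postfix is strictly beaten in PTD cost by the corresponding route carrying $\vec a$, and hence cannot be optimal, which is the claim. The special case $m = 0$ reduces to Corollary~\ref{corollary1}.

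I expect the main obstacle to be conceptual rather than computational: one must recognize that the set-equality hypothesis is precisely what collapses the $PE$-comparison to an equality, so that the single strict inequality in $F1$ suffices to trigger iterative precedence; without it, Definition~\ref{def11} would demand a genuine inequality in the $1-PE$ term as well. The second delicate point is the transfer of prefix-validity between $\vec a$ and $\vec b$, since Theorem~\ref{theorem5} applies only to legitimate potential sequences; I must argue that any prefix admissible for $\vec b'$ is automatically admissible for $\vec a'$, which again hinges entirely on $C_{\vec a} = C_{\vec b}$.
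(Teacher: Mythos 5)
Your proof is correct and takes essentially the same approach as the paper: you use the set-equality hypothesis to force \(PE(\vec a) = PE(\vec b)\) so that the single strict inequality \(F1(\vec a) < F1(\vec b)\) triggers the second clause of Definition~\ref{def11}, then invoke Theorem~\ref{theorem5} together with the observation that \(C_{\vec a} = C_{\vec b}\) makes any prefix admissible for \(\vec b\) also admissible for \(\vec a\). This is precisely the paper's own justification (given informally in the paragraph following the corollary), which you have merely spelled out with more care about prefix validity.
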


As we know, the major obstacle why batch pruning is not suitable for incrementally pruning potential sequences is that we may not append the same prefix sequence for all of the potential sequences with the same source point and length. In Corollary~\ref{corollary2}, we add a constraint that the involved potential sequences must have the same source point and the same set of pick-up points. Then, it is obvious that all the involved potential sequences with the same length can be appended with the same possible prefix sequence. Since \(1 - PE(\vec a) = \overline {{P_{{c_s}}}}  \cdot \overline {{P_{{c_{{a_1}}}}}}  \cdot  \ldots  \cdot \overline {{P_{{c_{{a_k}}}}}} \), \(1 - PE(\vec b) = \overline {{P_{{c_s}}}}  \cdot \overline {{P_{{c_{{b_1}}}}}}  \cdot  \ldots  \cdot \overline {{P_{{c_{{b_k}}}}}} \) and \(\{ {c_{{a_1}}}, \ldots ,{c_{{a_{_k}}}}\}  = \{ {c_{{b_1}}}, \ldots ,{c_{{b_{_k}}}}\} \), then \(1 - PE(\vec a)\) = \(1 - PE(\vec b)\). Thus, we can simplify the iterative condition of precedence as \(F1(\vec a) < F1(\vec b)\).

\begin{figure}[!t]
\centering
\includegraphics[width=3.2in]{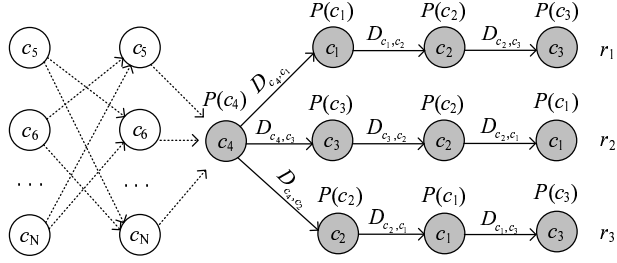}
\caption{An example of incremental sequence growing and pruning.\label{fig5}}
\end{figure}

Let us study the example shown in Figure~\ref{fig5}. There are three sequences with length 4: \({\vec r_1} = \left\langle {c_4}, {c_1}, {c_2}, {c_3}\right\rangle\) , \({\vec r_2} = \left\langle {c_4}, {c_3}, {c_2}, {c_1}\right\rangle\) and \({\vec r_3} = \left\langle {c_4}, {c_2}, {c_1}, {c_3} \right\rangle\). For any pick-up point \(c \in C - \{ {c_1},{c_2},{c_3},{c_4}\} \), it can be appended ahead of the three sequences to construct three new potential sequences with length 5. If \(F1({\vec r_1}) < F1({\vec r_2}) < F1({\vec r_3})\), then \({\vec r_1} \propto {\vec r_2} \propto {\vec r_3}\), i.e., \({\vec r_1} \prec {\vec r_2} \prec {\vec r_3}\). That is to say, \({\vec r_2}\) and \({\vec r_3}\) can be pruned in advance. Because any possible driving routes with a postfix sequence of the pruned sequence are not the optimal routes, they can be removed incrementally. However, \({\vec r_1}\) remains as a sequence candidate with length 4 and it is considered as the possible postfix of other longer potential sequences.

\subsection{The Analysis of Pruning Effect}
In this subsection, we analyze the pruning ratio of our incremental and batch pruning methods respectively. Let the total number of potential sequences be $M$ and the number of the remaining sequences after pruning be $M'$, the pruning ratio $\eta  = {{(M - M')} \mathord{\left/ {\vphantom {{(M - M')} M}} \right. \kern-\nulldelimiterspace} M}$.

\begin{theorem}
\label{theorem6}
For all possible potential sequences with length $L$ \((3 \le L \le N)\), the incremental pruning ratio is
\(\eta  = {{1 - 1} \mathord{\left/
 {\vphantom {{1 - 1} {(L - 1)!}}} \right.
 \kern-\nulldelimiterspace} {(L - 1)!}}\).
\end{theorem}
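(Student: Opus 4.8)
The plan is to prove the formula by a direct counting argument. I would partition the set of all length-$L$ potential sequences into equivalence classes on which incremental pruning acts, show that exactly one sequence survives in each class, and then read off $\eta=(M-M')/M$ from the class count and the class size. To begin, I would count the total number $M$: a length-$L$ potential sequence is an ordered arrangement of $L$ distinct pick-up points chosen from the $N$ points of $C$, so $M=\binom{N}{L}\cdot L!=N!/(N-L)!$.

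Next I would identify the equivalence classes dictated by incremental pruning. By Corollary~\ref{corollary2}, two length-$L$ sequences are compared (and one of them pruned) precisely when they share the same source point $c_s$ and the same set of remaining pick-up points $\{c_{a_1},\ldots,c_{a_k}\}$ with $k=L-1$. I would therefore group the sequences so that each class consists of all sequences with a fixed source and a fixed set of $L-1$ non-source points; the members of a class are exactly the $(L-1)!$ orderings of those $L-1$ points. A class is specified by choosing the source ($N$ ways) and the remaining set ($\binom{N-1}{L-1}$ ways), so there are $N\binom{N-1}{L-1}=M/(L-1)!$ classes, which is consistent with the class size $(L-1)!$.

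Then I would apply the pruning rule inside each class. Corollary~\ref{corollary2} states that whenever $F1(\vec a)<F1(\vec b)$ for $\vec a,\vec b$ in the same class, every route having $\vec b$ as a postfix is non-optimal, so $\vec b$ is removed; hence only the sequence attaining the minimum value of $F1$ survives. Assuming the $F1$ values within a class are distinct (the generic case, so the minimiser is unique), exactly one sequence survives per class, giving $M'=M/(L-1)!$. Therefore
\[
\eta=\frac{M-M'}{M}=1-\frac{M'}{M}=1-\frac{1}{(L-1)!}.
\]
The hypothesis $L\ge 3$ is exactly what makes Corollary~\ref{corollary2} applicable (its condition $k\ge 2$ corresponds to length at least $3$) and ensures $(L-1)!>1$, so the ratio is strictly positive.

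The main obstacle I anticipate is not the arithmetic but justifying that incremental pruning acts \emph{independently and completely} within each class. I must verify, first, that sequences in different classes are never compared, so no cross-class pruning inflates the survivor count; this holds because Corollary~\ref{corollary2} only relates sequences with identical source and identical point set. Second, I must check that within a class the pruning chain collapses to a single survivor rather than leaving several mutually incomparable sequences; this follows because, restricted to a single class, the iterative-precedence relation depends only on $F1$ (since $1-PE(\vec a)=1-PE(\vec b)$ whenever the point sets coincide, as noted just after Corollary~\ref{corollary2}), and $F1$ induces a total order, so the unique $F1$-minimiser precedes every other member of the class.
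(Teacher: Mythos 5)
Your proposal is correct and follows essentially the same argument as the paper: count $M=\binom{N}{L}\cdot L!$, partition the sequences into groups sharing a source point and pick-up set (each of size $(L-1)!$), invoke Corollary~\ref{corollary2} to keep only the $F1$-minimiser per group, and conclude $M'=M/(L-1)!$, hence $\eta=1-1/(L-1)!$. Your version is in fact somewhat more careful than the paper's (which glosses over ties with ``in most cases'' and does not explicitly rule out cross-class comparisons), but the decomposition and the key lemma are identical.
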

\begin{proof}
Given a set of potential pick-up points $C$ with $|C|=N$, the number of the potential sequences with length $L$ \((3 \le L \le N)\) is  \(M = \left( {\begin{array}{*{20}{c}}N\\L\end{array}} \right) \cdot L!\). In the process of incremental pruning, we only consider a group of potential sequences with the same source point and the same set of pick-up points. Based on Corollary~\ref{corollary2}, the precedence relationships of these sequences are only related to the $F1$ values of them. Actually, in most cases we choose an optimal sequence with the minimum $F1$ value from all these potential sequences. Since the number of the permutation of $L-1$ pick-up points except for the same source point is \((L - 1)!\), the number of the remaining sequences
\(M' = {{\left( {\begin{array}{*{20}c}
   N  \\
   L  \\
\end{array}} \right) \cdot L!} \mathord{\left/
 {\vphantom {{\left( {\begin{array}{*{20}c}
   N  \\
   L  \\
\end{array}} \right) \cdot L!} {(L - 1)!}}} \right.
 \kern-\nulldelimiterspace} {(L - 1)!}}  = \left( {\begin{array}{*{20}{c}}N\\L\end{array}} \right) \cdot L\).
Then the pruning percentage is \(
\eta  = {{(M - M')} \mathord{\left/
 {\vphantom {{(M - M')} {M = 1 - }}} \right.
 \kern-\nulldelimiterspace} {M = 1 - }}{{\left( {\left( {\begin{array}{*{20}c}
   N  \\
   L  \\
\end{array}} \right) \cdot L} \right)} \mathord{\left/
 {\vphantom {{\left( {\left( {\begin{array}{*{20}c}
   N  \\
   L  \\
\end{array}} \right) \cdot L} \right)} {\left( {\left( {\begin{array}{*{20}c}
   N  \\
   L  \\
\end{array}} \right) \cdot L!} \right)}}} \right.
 \kern-\nulldelimiterspace} {\left( {\left( {\begin{array}{*{20}c}
   N  \\
   L  \\
\end{array}} \right) \cdot L!} \right)}} = 1 - {1 \mathord{\left/
 {\vphantom {1 {(L - 1)!}}} \right.
 \kern-\nulldelimiterspace} {(L - 1)!}}
 .\)
\end{proof}

Note that the incremental pruning method is only applied to deal with the potential sequences with the length \(L \ge 3\). According to Theorem~\ref{theorem6}, the incremental pruning ratio sharply increases along with the increase of the length of the sequences. In order to remove more non-optimal sequences, we need to use the batch pruning method on the remaining sequences after the incremental pruning process. As a result, the pruning ratio can be improved further.

In the process of batch pruning, we compare the precedence relations between the remaining potential sequences with the same source point $c \in C$ and the same length $L$. As we know, whether a potential sequence will be removed by the batch pruning method is related to the $F1$ and $PE$ values of the sequences. Therefore, with the increase of the length $L$, the probability of the equivalence of the $F1$ and $PE$ for any pair of sequences with the same source point will become lower and lower. As a result, the number of the remaining sequence candidates after incremental and batch pruning process is close or equal to $N$ when the length $L$ is close to $N$.
%

\section{THE ALGORITHM}

Based on the analysis above, we first present the offline generation algorithm of the potential sequence candidates and the online route query algorithm. Then, we analyze the computational complexity of our algorithms.

\subsection{The Offline Processing Algorithms}
The detail of our dynamic programming based algorithm BP-Growth is given in Algorithm~\ref{algorithm1}. It generates the potential sequence candidates in the offline stage when the position of a cab is not involved. In order to construct all possible potential sequence candidates incrementally and efficiently, a backward path growth procedure and an incremental sequence pruning process are employed which combines with the iterative calculation of the $F1$ and $PE$ values of the potential sequences.

\begin{algorithm}[!htb]
\scriptsize
\caption{BP-Growth\label{algorithm1}}
\begin{algorithmic}[1]
\renewcommand{\algorithmicrequire}{\textbf{Input:}}
\renewcommand{\algorithmicensure}{\textbf{Output:}}
\REQUIRE{A set of potential pick-up points $C$, the  probability set $P$ for all pick-up points, the pairwise driving distance matrix $D$ of pick-up points.}
\ENSURE{A set of the potential sequence candidates \(\overrightarrow R \) with length $L$ from 1 to $N$}
\STATE \(\overrightarrow {{R^1}}  \leftarrow \emptyset \);
\FOR{each \({c_i} \in C\)}
    \STATE \(\vec r \leftarrow  \left\langle{c_i}\right\rangle\); \(F1(\vec r) \leftarrow 0\); \(PE(\vec r) \leftarrow P(c_i )\); \(\overrightarrow {R^1 }  \leftarrow \overrightarrow {R^1 }  \cup \{ \vec r\}\);
\ENDFOR
\FOR{$L=2$ to $N$}
    \STATE \(\overrightarrow {{R^L}}  \leftarrow \emptyset \);
        \FOR{each \(\vec r \in \overrightarrow {{R^{L - 1}}} \)}
            \FOR{each \({c_i} \in \left( {C - {C_{\vec r}}} \right)\)}
                \STATE //Potential Sequence Generation
                \STATE \(\vec p \leftarrow \left\langle{{c_i},\vec r}\right\rangle \); \(c \leftarrow s(\vec r)\);
                \STATE \(F1(\vec p) \leftarrow F1(\vec r) \cdot \overline {P(c)}  + {D_{{c_i},c}}PE(\vec r)\);
                \STATE \(PE(\vec p) \leftarrow PE(\vec r) \cdot \overline {P({c_i})}  + P({c_i})\);
                \STATE //Incremental Sequence Pruning
                \STATE \(\overrightarrow {R_{\vec p}^L}  = \{ \vec q|\vec q \in \overrightarrow {{R^L}} ,s(\vec q) = s(\vec p),{C_{\vec q}} = {C_{\vec p}}\}\);
                \IF{\(\overrightarrow {R_{\vec p}^L}  = \emptyset \)}
                    \STATE \(\overrightarrow {{R^L}}  \leftarrow \overrightarrow {{R^L}}  \cup \{ \vec p\} \);
                \ELSE
                    \IF{\(\forall \vec q \in \overrightarrow {{R_{\vec p}^L}} ({\rm{F}}1(\vec p) = F1(\vec q))\)}
                        \STATE \(\overrightarrow {{R^L}}  \leftarrow \overrightarrow {{R^L}}  \cup \{ \vec p\} \);
                    \ENDIF
                \ELSE
                    \IF{\(\forall \vec q \in \overrightarrow {{R_{\vec p}^L}} (F1(\vec p) < F1(\vec q))\)}
                        \STATE \(\overrightarrow {{R^L}}  \leftarrow \left( {\overrightarrow {{R^L}}  - \overrightarrow {R_{\vec p}^L} } \right) \cup \{ \vec p\} \);
                    \ENDIF
                \ENDIF
            \ENDFOR
        \ENDFOR
\ENDFOR
\RETURN {\(\overrightarrow R  = \mathop  \cup \limits_{L = 1}^{^{{N}}} \overrightarrow {{R^L}} \);}
\end{algorithmic}
\end{algorithm}

After the sequence generation and pruning process of Algorithm~\ref{algorithm1}, we will obtain a set of sequence candidates with length from 1 to $N$. For the potential sequence candidates, we adopt the batch pruning algorithm to reduce the number of sequence candidates further. As we know, after the sequence candidates are produced offline, the $F1$ and $PE$ values of these sequences have also been calculated iteratively. Therefore, we can directly compare the $F1$ and $PE$ values between the potential sequence candidates to prune the non-optimal ones during the batch pruning process which is described in Algorithm~\ref{algorithm2}.

\begin{algorithm}[!htb]
\scriptsize
\caption{BatchPruning\label{algorithm2}}
\begin{algorithmic}[1]
\renewcommand{\algorithmicrequire}{\textbf{Input:}}
\renewcommand{\algorithmicensure}{\textbf{Output:}}
\REQUIRE{A set of the potential sequences \(\overrightarrow {{R^L}} \) with length $L$.}
\ENSURE{A set of the remaining sequence candidates \(\overrightarrow {{{R'}^L}} \) with length $L$.}
\FOR {each \(c \in C\)}
    \STATE \(\overrightarrow {R_c^L}  \leftarrow \emptyset \);
\ENDFOR
\FOR {each \(\vec r \in \overrightarrow {{R^L}} \)}
    \STATE \(c \leftarrow s(\vec r)\);
    \STATE \(\overrightarrow {R_c^L}  \leftarrow \overrightarrow {R_c^L}  \cup \left\{ {\overrightarrow r } \right\}\);
        \FOR{each \(\overrightarrow q  \in \overrightarrow {R_c^L}  \wedge \overrightarrow r  \ne \overrightarrow q \)}
            \IF{\(\vec q \propto \vec r\)}
                \STATE \(\overrightarrow {R_c^L}  \leftarrow \overrightarrow {R_c^L}  - \left\{ {\overrightarrow r } \right\}\);
                \STATE break;
            \ELSE
                \IF{\(\vec r \propto \vec q\)}
                    \STATE \(\overrightarrow {R_c^L}  \leftarrow \overrightarrow {R_c^L}  - \left\{ {\overrightarrow q } \right\}\);
                \ENDIF
            \ENDIF
        \ENDFOR
\ENDFOR
\RETURN{\(\overrightarrow {{R^{'L}}}  = \mathop  \cup \limits_{c \in C} \overrightarrow {R_c^L} \);}
\end{algorithmic}
\end{algorithm}

\subsection{The Online Search Algorithm}
Our method is able to provide real-time driving route recommendation service for the unloaded cabs at various positions. When a cab at the position \({c_0}\) requests the recommendation service, an online search algorithm is adopted to find an optimal driving route from the remaining potential sequences generated in the offline stage. Algorithm~\ref{algorithm3} shows the online search procedure of optimal route in detail.

\begin{algorithm}[!htb]
\scriptsize
\caption{RouteOnline\label{algorithm3}}
\begin{algorithmic}[1]
\renewcommand{\algorithmicrequire}{\textbf{Input:}}
\renewcommand{\algorithmicensure}{\textbf{Output:}}
\REQUIRE{: A set of the sequence candidates \(\overrightarrow R\), the current position of a cab \({c_0}\) and the minimum length $L_{\min}$ and maximum length $L_{\max}$ of the suggested driving route $\left( {1 \le L_{\min }  \le L_{\max }  \le N} \right)$.}
\ENSURE{: A set of the optimal driving routes \(\overrightarrow {D_{\min }} \).}
\STATE \(\overrightarrow {D_{\min }}  \leftarrow \emptyset \); \({F_{\min }} \leftarrow +\infty \);
\FOR{$L=L_{\min}$ to $L_{\max}$}
  \FOR{ each \(\vec r \in \overrightarrow {R^L } \)}
    \STATE \(c = s(\vec r)\);
    \STATE \(\vec d = \left\langle{{c_0},\vec r}\right\rangle \);
    \STATE \(F(\vec d) = F1(\vec r) \cdot \left( {1 - P(c)} \right) + {D_{{c_0},c}} \cdot PE(\vec r) + {D_\infty} \cdot (1 - PE(\vec r))\);
    \IF{\(\overrightarrow {D_{\min }}  = \emptyset  \vee F(\vec d) = {F_{\min }}\)}
        \STATE \(\overrightarrow {D_{\min }}  \leftarrow \overrightarrow {D_{\min }}  \cup \{\mathord{\buildrel{\lower3pt\hbox{$\scriptscriptstyle\rightarrow$}}\over d} \} \);
    \ELSE
        \IF{\(F(\vec d) < {F_{\min }}\)}
            \STATE \(\overrightarrow {D_{\min }}  \leftarrow \{ \mathord{\buildrel{\lower3pt\hbox{$\scriptscriptstyle\rightarrow$}}\over d} \} \);
                    \({F_{\min }} \leftarrow F(\vec d)\);
        \ENDIF
    \ENDIF
  \ENDFOR
\ENDFOR
\RETURN{\(\overrightarrow {D_{\min }} \);}
\end{algorithmic}
\end{algorithm}

In Algorithm~\ref{algorithm3}, for each $L\left( {L_{\min }  \le L \le L_{\max } } \right)$, we first generate the potential driving routes \(\overrightarrow {{D^L}} \) with length $L$ by connecting \({c_0}\) with each potential sequence candidate in the set \(\overrightarrow {{R^L}} \). Then we calculate the PTD value of each potential driving route with Formula~\ref{equ6}. Finally, the driving routes with the minimal PTD value are selected and returned to the users.

\subsection{Analysis of Computational Complexity}
In this subsection, we analyze the computational complexities of the offline sequence generation and the online search algorithm respectively.

\subsubsection{Offline processing algorithms}
We first analyze the computational complexity of our offline algorithm BP-Growth. The key step in the algorithm BP-Growth is the incremental process of the sequence growing and pruning. As we know, in order to generate the potential sequences with length $L$, we append each pick-up point $c$ ahead of the sequence candidates with length $L-1$ that do not contain $c$. When the length of the potential sequence $L=1$, all pick-up points will be enumerated, so the computational complexity is $N$. When $L=2$, as we know, the number of sequence candidates with length 1 is $N$. Since each pick-up point only appears once in a potential sequence, we still have $N-1$ possible pick-up points for each sequence candidate. Therefore, the loop execution times of the key step for $L=2$ is $N(N-1)$. When we generate the potential sequences with length $L>2$, the number of the remaining sequence candidates with length $L-1$ after the incremental pruning process is \({{\left( {\begin{array}{*{20}c}
   N  \\
   {L - 1}  \\
\end{array}} \right) \cdot (L - 1)!} \mathord{\left/
 {\vphantom {{\left( {\begin{array}{*{20}c}
   N  \\
   {L - 1}  \\
\end{array}} \right) \cdot (L - 1)!} {(L - 2)!}}} \right.
 \kern-\nulldelimiterspace} {(L - 2)!}}\). Nevertheless, we still have $N-L+1$ pick-up points to be appended to the heads of these sequence candidates, and the computational times is \( \left( {{\left( {\begin{array}{*{20}c}
   N  \\
   {L - 1}  \\
\end{array}} \right) \cdot (L - 1)!} \mathord{\left/
 {\vphantom {{\left( {\begin{array}{*{20}c}
   N  \\
   {L - 1}  \\
\end{array}} \right) \cdot (L - 1)!} {(L - 2)!}}} \right.
 \kern-\nulldelimiterspace} {(L - 2)!}}
 \right) \cdot (N - L + 1) = \left( {\begin{array}{*{20}{c}}N\\L\end{array}} \right) \cdot L \cdot (L - 1)\). It can be seen that the computational complexity of the process with length $L =1$ is $O(N)$. It increases gradually and reaches the peak with $L=\left\lfloor {{\raise0.7ex\hbox{$N$} \!\mathord{\left/
 {\vphantom {N 2}}\right.\kern-\nulldelimiterspace}
\!\lower0.7ex\hbox{$2$}}} \right\rfloor$. After that, the computational complexities decrease and drop to \(O({N^2})\) with $L =N$.

We then present the computational complexity analysis of our algorithm BP-Growth for generating all possible sequences with length from 1 to $N$.

Given a set of pick-up points $C$ with \(\left| C \right| = N\), as we know, the total execution times for generating all the potential sequences with length \(L \le N\) is \\
\(f(N) = N + N \cdot (N - 1) + \sum\limits_{L = 3}^N {(L - 1)}  \cdot L\left( {\begin{array}{*{20}{c}}N\\L\end{array}} \right)\).\\
 $f(N)$ can be transformed to
\[
\small
\begin{array}{l}
f(N) = N + 2\left( {\begin{array}{*{20}{c}}N\\2\end{array}} \right) + 2 \cdot 3\left( {\begin{array}{*{20}{c}}N\\3\end{array}} \right) +  \ldots  + (N - 2) \cdot (N - 1)\left( {\begin{array}{*{20}{c}}N\\{N - 1}\end{array}} \right) + (N - 1) \cdot N\left( {\begin{array}{*{20}{c}}N\\N\end{array}} \right)\\
\end{array}.
\]

Since \(L \cdot \left( {\begin{array}{*{20}{c}}N\\L\end{array}} \right) = (N - L + 1)\left( {\begin{array}{*{20}{c}}N\\{L - 1}\end{array}} \right)\), then \(f(N)\) can also be described by the following equation
\[
\small
\begin{array}{l}
f(N) = N + (N - 1)\left( {\begin{array}{*{20}{c}}N\\1\end{array}} \right) + 2(N - 2)\left( {\begin{array}{*{20}{c}}N\\2\end{array}} \right) + 3(N - 3)\left( {\begin{array}{*{20}{c}}N\\3\end{array}} \right) +  \ldots  + (N - 1)\left( {\begin{array}{*{20}{c}}N\\{N - 1}\end{array}} \right)
\end{array}.
\]
If we add above two equations, we will obtain the following deduction.
\[
\small
\begin{array}{l}
2f(N) = 2N + (N - 1)\left( {\begin{array}{*{20}{c}}N\\1\end{array}} \right) + 2(N - 1)\left( {\begin{array}{*{20}{c}}N\\2\end{array}} \right) +  3(N - 1)\left( {\begin{array}{*{20}{c}}N\\3\end{array}} \right) +  \ldots  + N(N - 1)\left( {\begin{array}{*{20}{c}}N\\N\end{array}} \right) \\
 \qquad\quad = 2N + (N - 1)(\left( {\begin{array}{*{20}{c}}N\\1\end{array}} \right) + 2\left( {\begin{array}{*{20}{c}}N\\2\end{array}} \right) + 3\left( {\begin{array}{*{20}{c}}N\\3\end{array}} \right) +  \ldots  + N\left( {\begin{array}{*{20}{c}}N\\N\end{array}} \right))\\
 \qquad\quad = 2N + N(N - 1) \cdot {2^{N - 1}}\\
\end{array}.
\]
Then \(f(N) = N + N(N - 1) \cdot {2^{N - 2}}\). Thus, \(O(f(N)) = O({N^2} \cdot {2^N})\).

In summary, the computational complexity of incremental generation of the potential sequences with all possible length $L \left( 1 \le L \le N \right)$ via BP-Growth is \(O({N^2} \cdot {2^N})\).

\subsubsection{Online search algorithm}
The computational complexity of our online search algorithm with $L_{\min} = L_{\max} = L$ directly depends on the number of the remaining sequence candidates in \(\overrightarrow {{R^L}} \). For the set of the sequence candidates \(\overrightarrow {{R^L}} \) produced by algorithm BP-Growth with incremental pruning, \(\left| {\overrightarrow {{R^L}} } \right| = \left( {\begin{array}{*{20}{c}}
N\\L\end{array}} \right) \cdot L  \). Therefore, when we set $L=1$ or $N$, the computational complexity of our online search algorithm RouteOnline is \(O\left( N \right)\). When \(L = \left\lfloor {{\raise0.7ex\hbox{$N$} \!\mathord{\left/
 {\vphantom {N 2}}\right.\kern-\nulldelimiterspace}
\!\lower0.7ex\hbox{$2$}}} \right\rfloor \), the computational complexity is the highest which is close to \(O\left( {N \cdot {2^N}} \right)\). If we use both the incremental and the batch pruning processes, the search efficiency can be significantly enhanced. However, it is hard to obtain the precise analysis of its computational complexity. As for the search time of route query with a constraint of minimum length $L_{\min}$ and maximum length $L_{\max}$, it is just the sum of the search time in each set of sequence candidates \(\overrightarrow {{R^L}} \left( L_{\min} \le L \le L_{\max} \right) \).

\section{EXPERIMENTAL EVALUATIONS}
In this section, we evaluate the performance of our method by comparing its pruning effect, Memory consumption and online search time with those of other state-of-the-art methods. All acronyms of evaluated algorithms are given in Table~\ref{table2}. LCP and SkyRoute are two route dominance based pruning algorithms proposed in \cite{2}. In particular, SkyRoute is an online pruning algorithm, where two skyline computing methods BNL and D\&C can be applied to prune potential sequences \cite{8}. Its corresponding online search methods are denoted by SR(BNL)S and SR(D\&C)S, respectively. All the algorithms were implemented in Visual C++ 6.0. The experiments were conducted on a PC with a Intel Pentium Dual E2180 processor and 4GB RAM.

\subsection{Data Sets}
The adopted experimental data sets are divided into two categories: real-world data and synthetic data.

\textbf{Real-World Data.} In the experiments, we adopt real-world cab mobility traces used in \cite{2}, which are provided
by Exploratorium - the museum of science, art and human perception. It contains GPS location traces of 514 taxis collected around 30 days in the San Francisco Bay Area. We extract 21,980 and 38,280 historical pick-up locations of all the taxi drivers on two time periods: 2PM-3PM and 6PM-7PM. In total, we obtain 10 and 25 clusters as well as their probabilities on these two real data sets using the same method adopted in \cite{2}.

\textbf{Synthetic Data.} We also generate four synthetic data sets. Specifically, we
randomly generate potential pick-up points and their pick-up probabilities within a special area by a standard uniform
distribution. In total, we have four synthetic data sets with 10, 15, 20 and 25 pick-up points respectively. The Euclidean distance instead of the driving distance is adopted to measure the distances between pairs of pick-up points.

For both real-world and synthetic data, we randomly generate the positions of the target cab for recommendation.

\begin{table*}[t]
\centering
\caption{Some acronyms used in experimental analysis. \label{table2}}
\begin{tabular}{|c|c|} \hline
LCP & Sequence pruning via route dominance\\
SkyRoute & Sequence pruning via skyline query\\
SR(BNL) & SkyRoute with skyline computing method BNL\\
SR(D\&C) & SkyRoute with skyline computing method D\&C\\
IP & Generation of potential sequence candidates\\
&via BP-Growth with incremental pruning \\
IBP & Generation of potential sequence candidates\\
& via BP-Growth with Incremental and Batch Pruning \\
\hline
BFS & Brute-force search \\
LCPS & Search via LCP \\
SR(BNL)S & Skyline search via the algorithm SkyRoute + BNL\\
SR(D\&C)S &  Skyline search via the algorithm SkyRoute + D\&C\\
IPS & Search on the potential sequences generated by IP \\
IBPS & Search on the potential sequences generated by IBP \\ \hline
\end{tabular}
\end{table*}

\subsection{The Overall Comparison of Pruning Effect}
As we know, algorithms BFS and LCP need to enumerate all possible sequences of a certain length $L$. For a set of potential pick-up points $C$ with $|C| =N$, the number of all possible sequences with length $L$ is \(\left( {\begin{array}{*{20}c}
   N  \\
   L  \\
\end{array}} \right) \cdot L!
\). And the computational complexity is \(O(N!)\) when $L=N$. When the number of pick-up points $N$ or the length of suggested route $L$ is a little larger (e.g., $N=20$ and $L=6$), both BFS and LCPS cannot finish the enumeration process in a rather long time. Therefore, when we analyze the pruning ratio varying with the length of suggested driving routes on the same set of pick-up points, we make the number of pick-up points small (e.g., $|C|=10$) in order to show the overall comparison of all concerned algorithms. When we analyze the pruning ratio varying with the number of pick-up points on the fixed length of driving routes, we also make the length of the routes small (e.g., $L=3$ and $L=5$). For the algorithms proposed in this paper, since the sequences are pruned incrementally, both the time and space complexity are better than that of BFS and LCP. Thus, we can use the synthetic data set with $|C|=25$ to analyze the pruning effect of the proposed incremental algorithm BP-Growth in detail.

\subsubsection{The Pruning Ratio Varying with the Length of Potential Sequence}

\begin{figure}[!htp]
\centering
\subfigure[Real-World Data]{\includegraphics[width=0.36\textwidth] {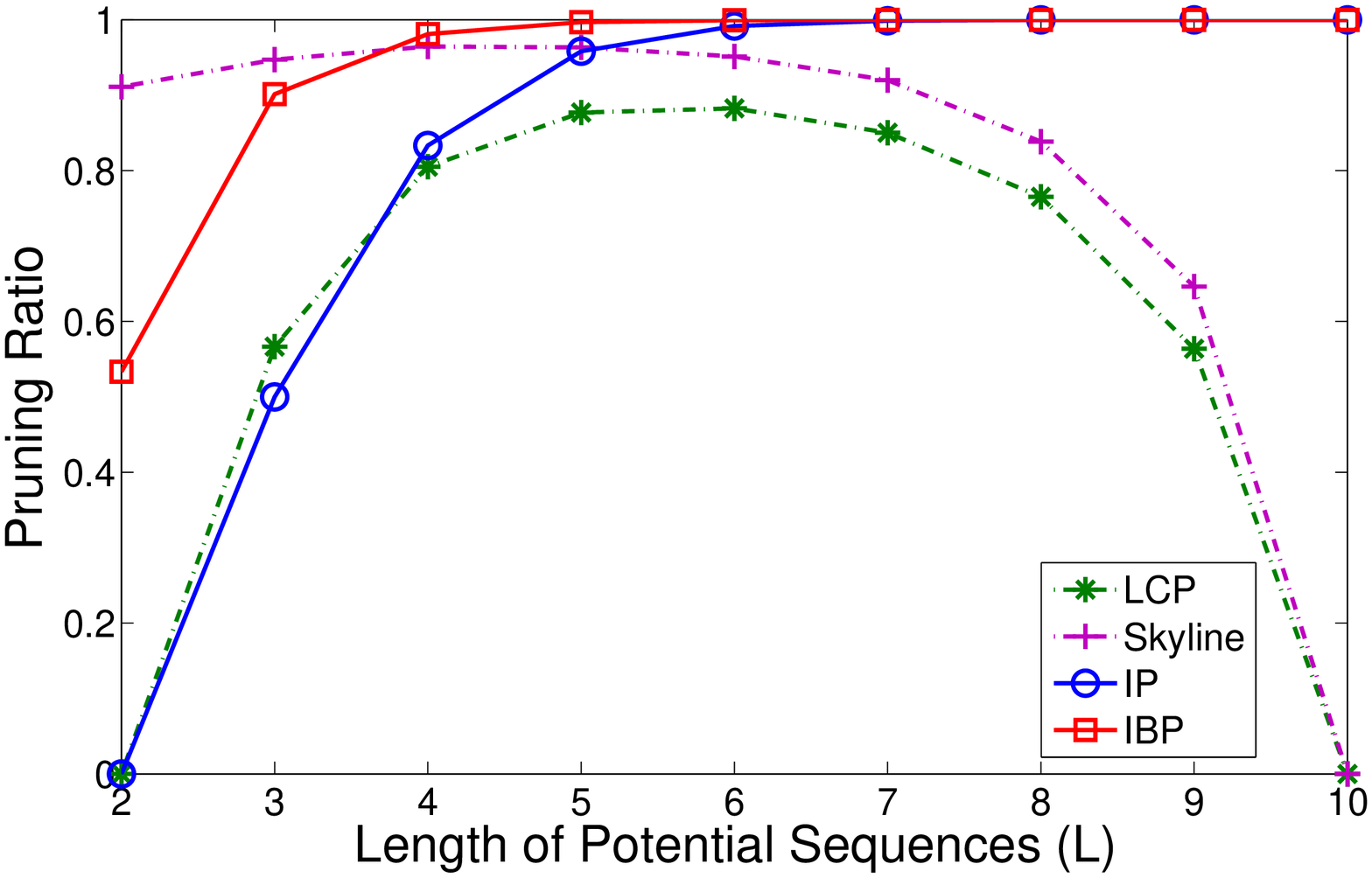}\label{pruneRealdataC10}}
\subfigure[Synthetic Data]{\includegraphics[width=0.36\textwidth] {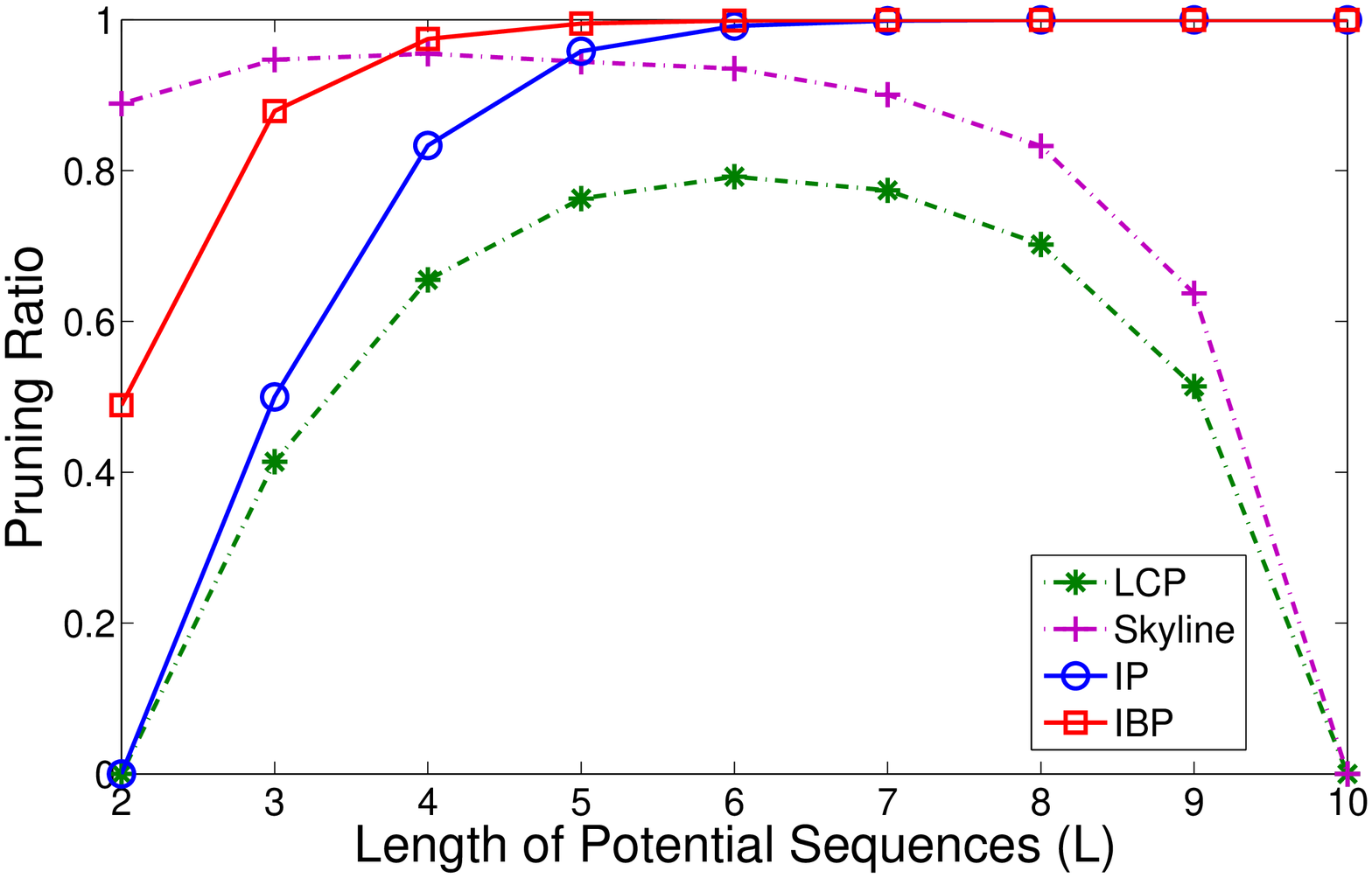}\label{pruneC10}}
\caption{The pruning ratio of different algorithms w.r.t. the length of potential sequence on the data sets with $|C|=10$.\label{pruneC}}
\end{figure}

Figure~\ref{pruneC} shows the varying of pruning ratio of several algorithms with the length of potential sequence on both real-world and synthetic data with $|C|=10$. Algorithms LCP, SkyRoute, IP and IBP are all able to prune some non-optimal sequences derived from $C$. When the length $L=2$ or $L=3$, the proposed algorithms IBP and IP perform worse than the algorithms SkyRoute and IP. However, as the length of the potential sequence $L$ increases, the pruning ratios of our algorithms IP and IBP are both significantly improved. It can be observed that IBP outperforms SkyRoute and IP outperforms LCP on both real and synthetic data when \(L \ge 5\). Furthermore, the pruning ratios of our algorithms are gradually improved and close to 1 when the length of suggested driving route \(L \ge 6\). In contrast, the change of the pruning ratios of LCP shows a trend of parabola. When \(L > 5\), the pruning ratios of LCP and SkyRoute both gradually drop. When the length is equal to the number of pick-up points (i.e., $L=|C|$), the pruning ratios of them decrease to 0.

%

To verify that our method can process the potential sequences derived from a larger number of pick-up points, we test the pruning ratios of IP and IBP on the synthetic data set with $|C|=25$. We find that the trends of the pruning ratios of our algorithms on different data sets are consistent. Since LCP and SkyRoute are only able to deal with the driving routes with \(L \le 5\) on the data set with $|C|=25$, we can not obtain the whole result of them on this bigger data set.

Let us analyze the reason why the pruning ratios of algorithms IP and IBP are relatively high. First, for the incremental pruning algorithm IP, its pruning ratio is equal to \({{1 - 1} \mathord{\left/
 {\vphantom {{1 - 1} {(L - 1)!}}} \right.
 \kern-\nulldelimiterspace} {(L - 1)!}}\) which dramatically increases along with the increase of the length of potential sequence. When $L=6$, the pruning percentage has reached 99.2\%.

\begin{figure}[!htp]
\centering
\includegraphics[width=0.8\textwidth]{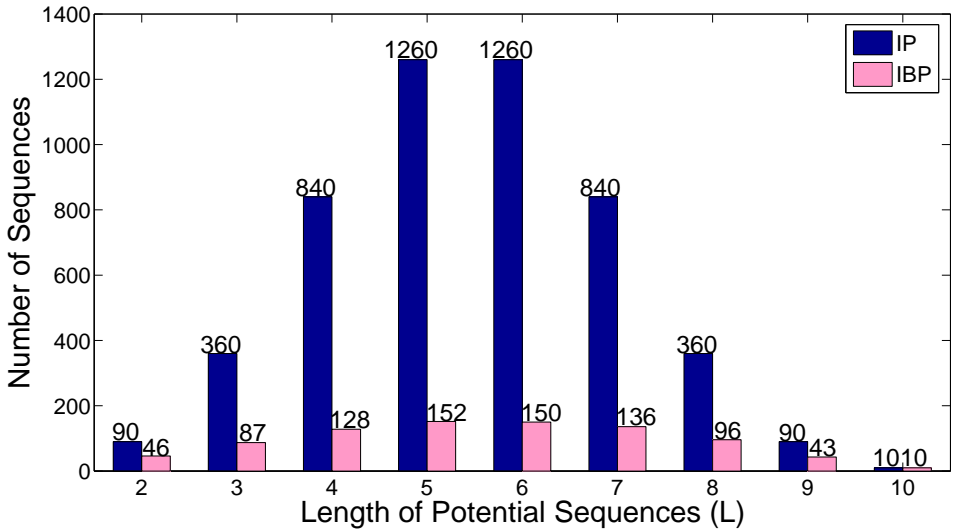}
\caption{The number of remaining sequence candidates after using the pruning algorithms IP and IBP respectively on the synthetic data set with $|C|=10$.\label{candidateC10}}
\end{figure}

\begin{figure}[!htp]
\centering
\subfigure[]{\includegraphics[width=0.48\textwidth] {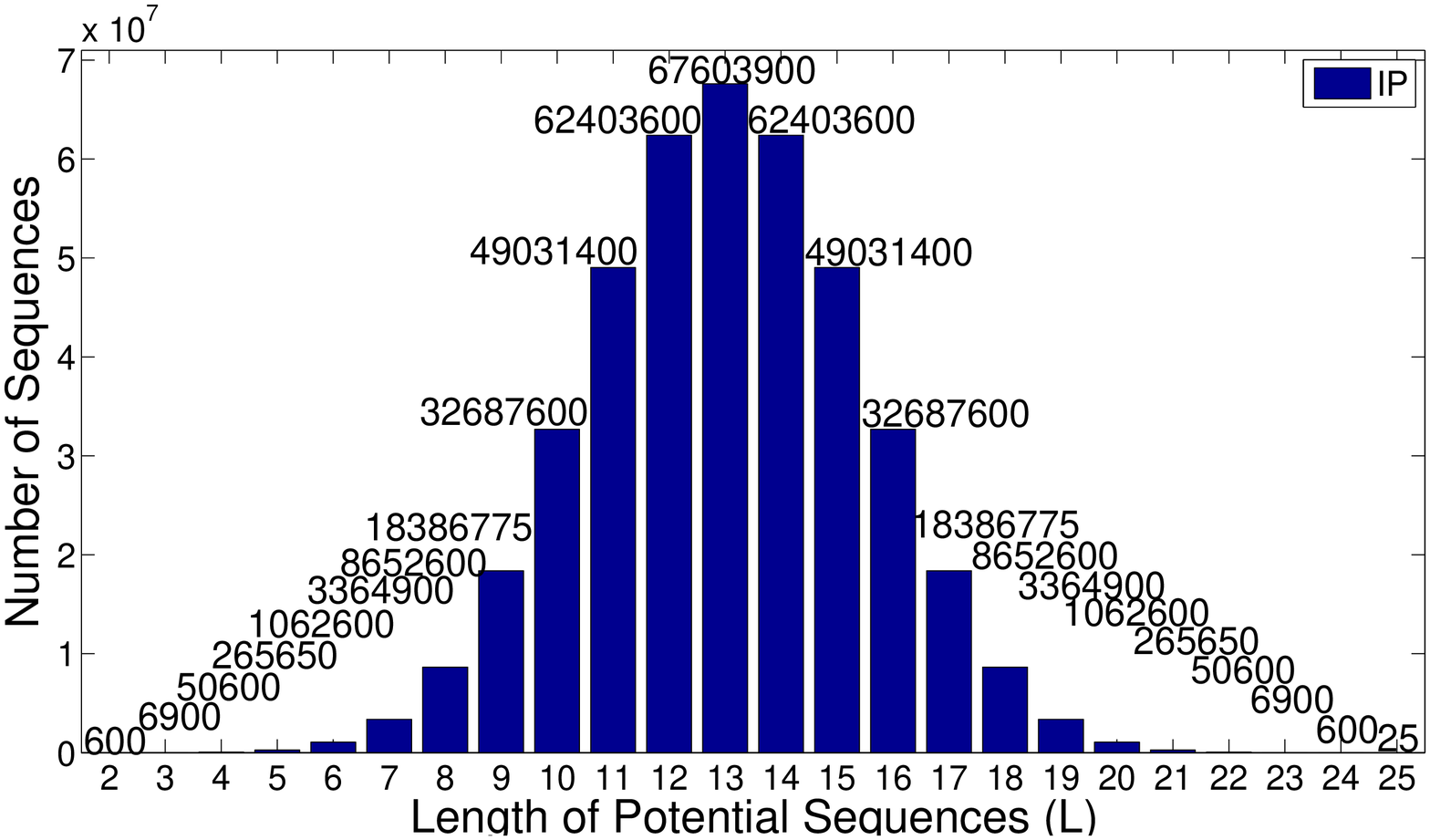}\label{candidateC25IP}}
\subfigure[]{\includegraphics[width=0.45\textwidth] {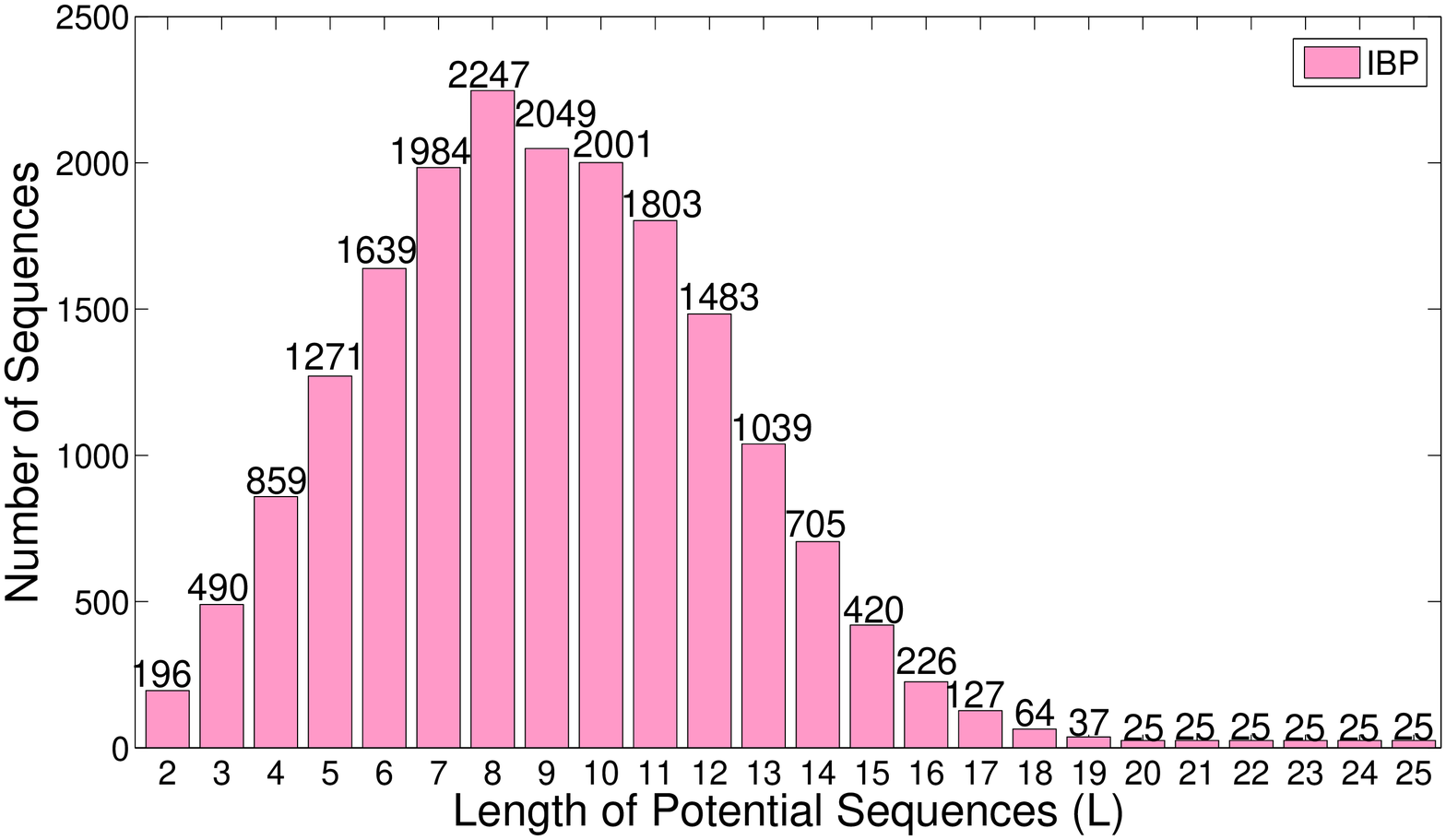}\label{candidateC25IBP}}
\caption{The number of remaining sequence candidates after using the pruning algorithms IP and IBP respectively on the synthetic data set with $|C|=25$.\label{candidateC25}}
\end{figure}

In addition, the algorithm IBP also adopts a batch pruning process to remove some non-optimal potential sequences. As shown in Figures~\ref{candidateC10} and \ref{candidateC25}, the number of remaining sequences of IBP can be several orders of magnitude smaller than that of IP, especially when $|C|$ is large, which demonstrates the effectiveness of batch pruning. The overall trend of the number of the remaining sequence candidates presents a Gaussian distribution. It increases first with the increase of the length $L$, and then decreases when $L \ge {{\left| C \right|} \mathord{\left/
 {\vphantom {{\left| C \right|} 2}} \right.
 \kern-\nulldelimiterspace} 2}$. Moreover, it is close to the number of pick-up points $|C|$ when $L \to |C|$, which is completely consistent with the analysis of Section 3.

In terms of LCP, whether a route is dominated by another route depends on the value of each dimension of the vector DP. When the value of $L$ is small, the pruning ratio has some growth with the increase of the length. However, when the sequence length becomes larger, the number of the dimensions of vector DP increases and the probability of domination in each dimension between DP vectors becomes lower, which leads to the gradual decline of the pruning ratio. When $L=|C|$, since all pick-up points are involved, it is impossible to make the probability of each pick-up point in a sequence larger than that of another. Thus, the pruning percentage is 0 in this case. As for SkyRoute, since the principle of it is similar as that of LCP, the overall trends of them are almost the same.

\subsubsection{The Pruning Ratio Varying with the Number of Pick-up Points}
\begin{figure}[!htp]
\centering
\subfigure[$L=3$]{\includegraphics[width=0.45\textwidth] {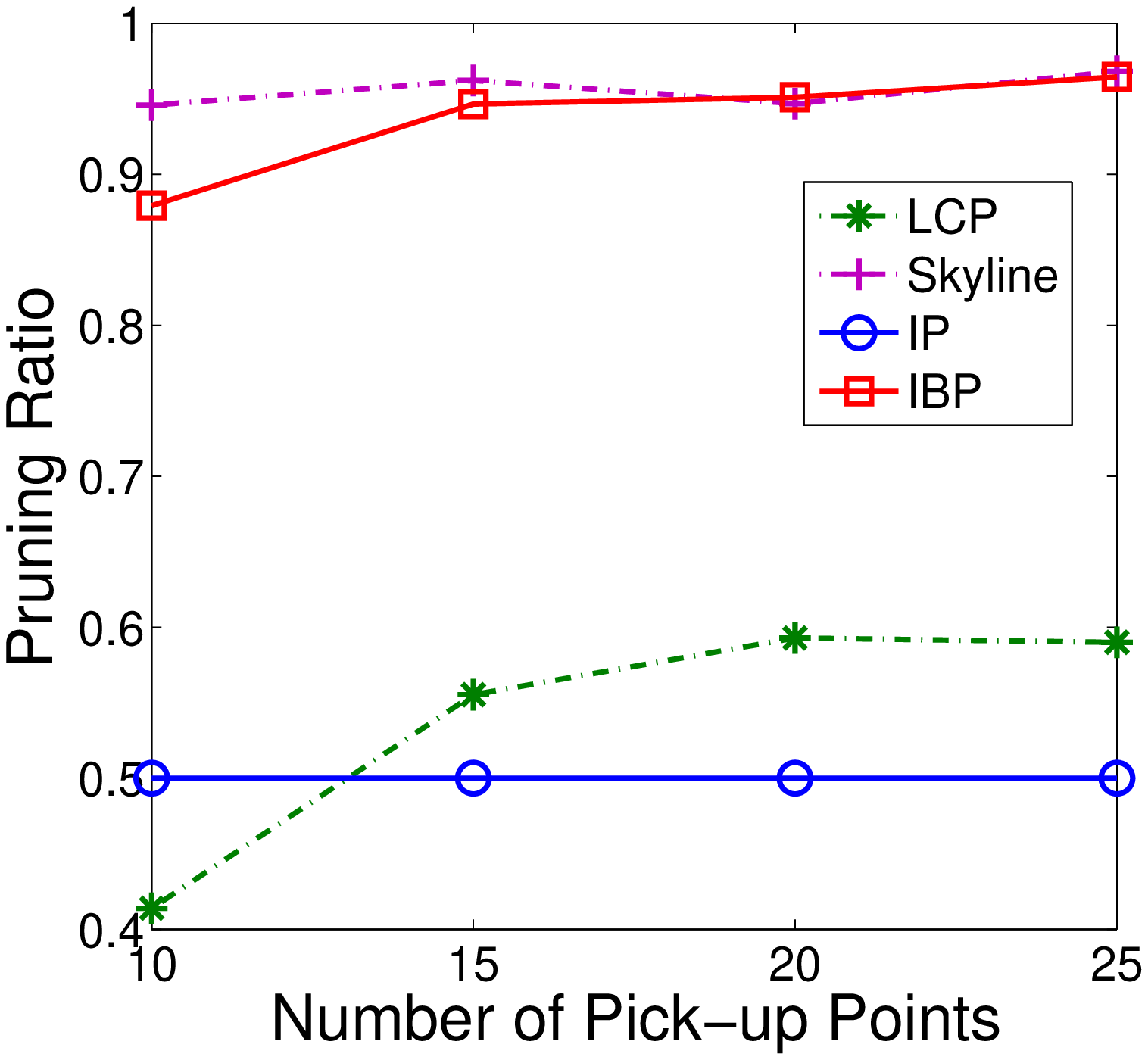}\label{pruneL3}}
\subfigure[$L=5$]{\includegraphics[width=0.45\textwidth] {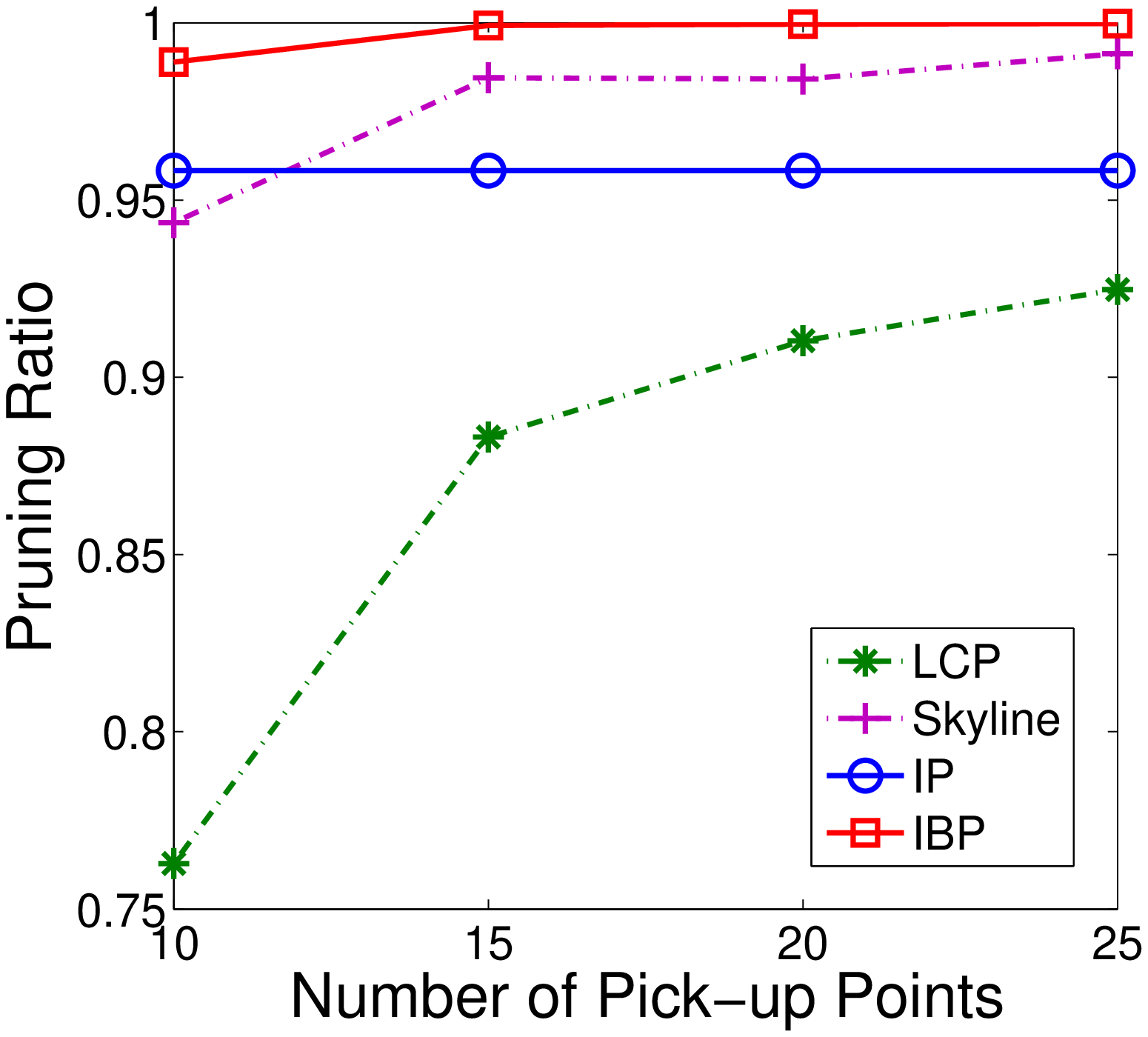}\label{pruneL5}}
\caption{The pruning ratio of different algorithms varying with the number of pick-up points on the synthetic data with $|C|=25$. \label{pruneL}}
\end{figure}
Figure~\ref{pruneL} shows the varying of pruning ratio with the number of pick-up points on synthetic data with $|C|=25$ when $L=3$ and $L=5$ respectively. It can be observed that the pruning ratios of LCP and IBP increase with the increase of the number of pick-up points, and the pruning ratio of IP is constant. When $L=3$, the pruning ratio of IP is equal to 0.5 and the pruning ratio of IBP gradually increases with the number of pick-up points. However, our algorithms do not perform better than algorithms LCP and SkyRoute. When $L=5$, the pruning ratio of IP is more than 0.95 and the pruning ratio of IBP is close to 1 which are much higher than those of LCP and SkyRoute respectively.

\subsection{Analysis of the Memory Consumption}
\begin{figure}[!htp]
\centering
\subfigure[$|C|=10$]{\includegraphics[width=0.45\textwidth] {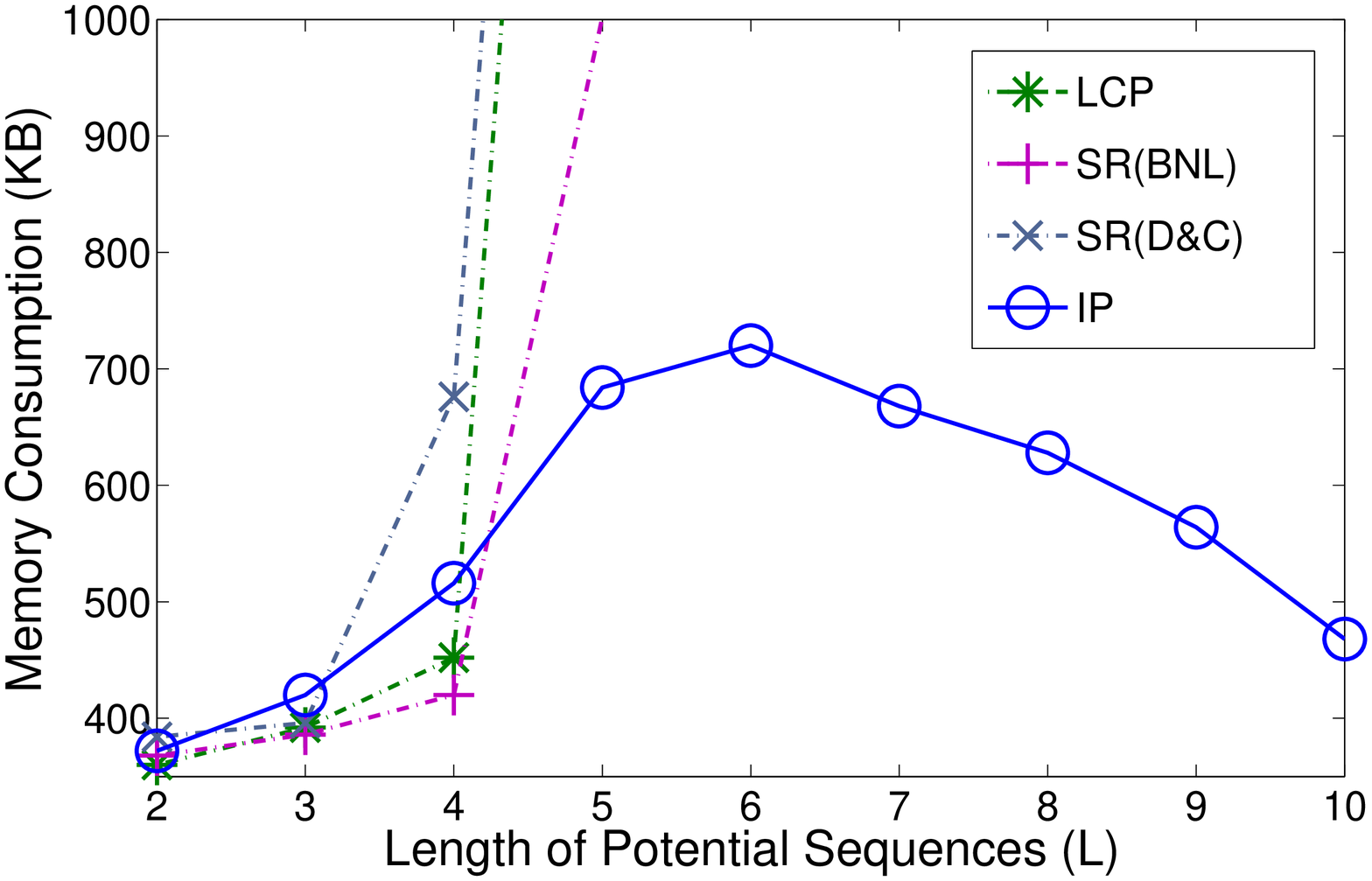}\label{MemC10}}
\subfigure[$|C|=20$]{\includegraphics[width=0.45\textwidth] {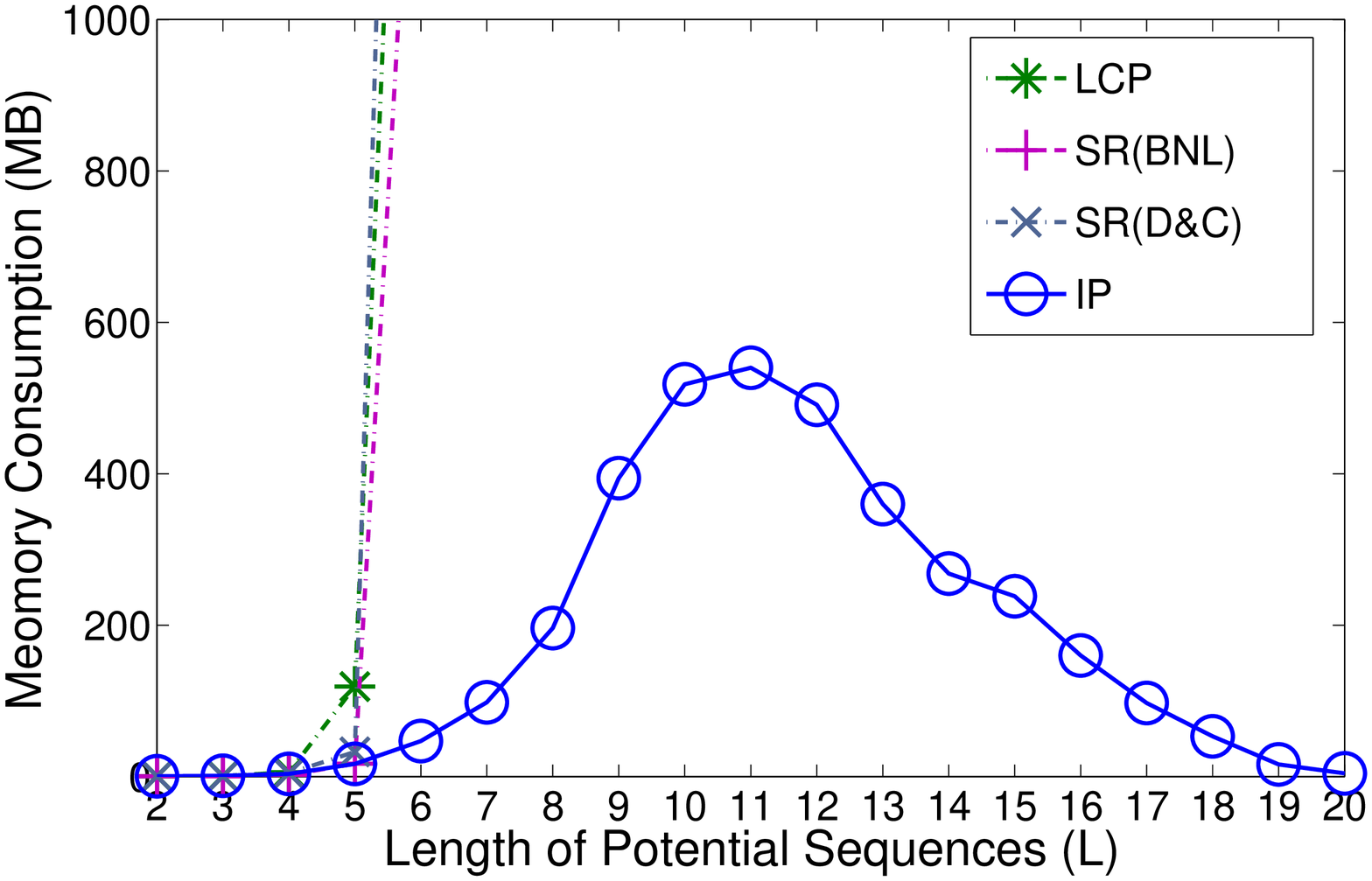}\label{MemC20}}
\caption{The internal memory consumption varying with the length of suggested driving routes on the synthetic data sets with $|C|=10$ and $|C|=20$. \label{MemC}}
\end{figure}
Figure~\ref{MemC} shows the varying of consumed internal memory storage with the length of potential sequence on synthetic data sets with $|C|=10$ and $|C|=20$ respectively. When the length $L \leq 4$, the memory cost of IP is a little higher than those of the other three algorithms due to the storage of some iterative calculation values, such as $F1$ and $PE$. However, as the length of potential sequence $L$ increases, the memory consumption of LCP, SR(BNL), and SR(D\&C) dramatically increase. Our algorithm IP presents an overall trend of parabola, which reaches the peak with 700K and 600M RAM on these two data sets respectively. Among algorithms LCP, SR(BNL) and SR(D\&C), the space cost of SR(D\&C) increases fastestly due to its recursive calculation process. In summary, the trends of the space cost of our algorithm IP on different data sets are consistent, and it is almost the same as the trend of the remaining sequence candidates.

Let us analyze the reason why the space cost of algorithm IP is relatively low. As we know, the generated sequence candidates and the associated values of $PE$ and $F1$ are stored in RAM only during the incremental process of generating the potential sequences from the length $L$ to $L+1$. Thus, the space cost is determined by the number of the remaining sequence candidates with the length $L$ and $L+1$ generated by IP. The number of the enumerated sequences dramatically increases with the increase of the number of pick-up points $N$ and the length of potential sequences $L$. Therefore, when the size of pick-up points $N$ becomes larger, the numbers of the remaining sequence candidates dramatically increase. Since the algorithm IP uses an incremental method to generate and reduce the potential sequences, its space performances are much better than those of BFS and LCP. Nevertheless, when the number of pick-up points is larger (e.g., $|C|=25$), the generation process of IP can not be performed in the internal memory of a PC with 4G RAM. In this case, we have to adopt external memory storage technology to generate the potential sequences.For algorithms BFS, LCP and SkyRoute, it is necessary to enumerate all possible sequences of a certain length, so the internal memory consumption is huge. As shown in Figure~\ref{MemC10}, they can only deal with the potential sequences with length \(L \le 5\) on the data set with $|C|=10$.
It can be observed that the memory consumption of our algorithm is really much lower than those of other methods.

\subsection{The Comparison of Online Search Time}
In this subsection, we compare the efficiency of various online route search algorithms. Note that all the results of search time come from the average values of 10 running cases. Tables~\ref{table3},~\ref{table4},~\ref{table5} and ~\ref{table6} show the online search time consumed by algorithms BFS, LCPS, SR(D\&C)S, SR(BNL)S, IPS and IBPS on both real-world and synthetic data sets with various numbers of pick-up points and lengths of suggested driving routes. It can be observed that the search time consumed by our algorithm IBPS is the least. The online route search of IPS is a little slower than that of IBPS. However, both of them always take a better performance over  the other four algorithms. For IPS and LCPS, as we know, when $L$ is small (e.g., $L=3$), the pruning ratio of IP is a little lower than that of LCP. However, our algorithm IP outperforms LCP benefited from the recursive computation of the PTD cost. The search time of the two skyline methods SR(D\&C)S and SR(BNL)S is much longer. The major reason is that skyline query is processed online which needs a rather long time. Therefore, this type of method is not suitable for recommending driving routes for a single cab. Actually, it performs better in providing multiple optimal driving routes for different cabs at the same place and time.

\begin{table*}[t]
\centering
\caption{A comparison of search time (millisecond) on the real-world data set (2-3PM).\label{table3}}
\begin{tabular}{|c|c|c|c|c|} \hline
&$L=2$&$L=3$&$L=4$&$L=5$ \\ \hline
BFS	&0.0077	&0.0286&	0.1980&	1.4341\\ \hline
LCPS&	0.0073&	0.0157&	0.0371&	0.1175\\ \hline
SR(D\&C)S&	10.3269	&41.6794&	182.0520&	1520.3100\\ \hline
SR(BNL)S&	1.9306&	24.3543&	139.0600&	2333.1200\\ \hline
IPS&	0.0070&	0.0110&	0.0165&	0.0246\\ \hline
IBPS&	0.0068&	0.0069	&0.0076&	0.0085\\ \hline
\end{tabular}
\end{table*}
\begin{table*}[t]
\centering
\caption{A comparison of search time (millisecond) on the synthetic data set with \(\left| C \right| = 15\) .\label{table4}}
\begin{tabular}{|c|c|c|c|c|} \hline
&$L=2$&$L=3$&$L=4$&$L=5$ \\ \hline
BFS &0.0125	&0.0925	&1.3028	&17.9584\\ \hline
LCPS&	0.0120&	0.0458&	0.3002&	1.9866\\ \hline
SR(D\&C)S&	24.7075	&154.6790&	1962.8100&	31210.4000\\ \hline
SR(BNL)S&	3.3707&	109.5560	&3612.4100&	210161.0000\\ \hline
IPS&	0.0089&	0.0556&	0.2317&	0.7322\\ \hline
IBPS&	0.0086&	0.0095&	0.0107&	0.0119\\ \hline
\end{tabular}
\end{table*}
\begin{table*}[t]
\centering
\caption{A comparison of search time (millisecond) on the synthetic data set with \(\left| C \right| = 20\).\label{table5}}
\begin{tabular}{|c|c|c|c|c|} \hline
&$L=2$&$L=3$&$L=4$&$L=5$ \\ \hline
BFS	&0.0189&	0.2244&	4.6117&	89.7804\\ \hline
LCPS&	0.0185&	0.1092&	0.8328&	7.9164\\ \hline
SR(D\&C)S&	40.1345	&617.7660&	11858.1000&	301341.0000\\ \hline
SR(BNL)S&	11.0259&	769.1580	&49492.5000&	5453600.0000\\ \hline
IPS	&0.0149	&0.0584&	0.3480&	1.5331\\ \hline
IBPS&	0.0120&	0.0141&	0.0172&	0.0202\\ \hline
\end{tabular}
\end{table*}
\begin{table*}[t]
\centering
\caption{A comparison of search time (millisecond) on the synthetic data set with \(\left| C \right| = 25\)\label{table6}}
\begin{tabular}{|c|c|c|c|c|} \hline
&$L=2$&$L=3$&$L=4$&$L=5$ \\ \hline
BFS	&0.0262&	0.4479&	11.9400	&305.5830\\ \hline
LCPS	&0.0260&	0.1919&	2.0199&	22.6318\\ \hline
SR(D\&C)S&	52.1815&	1362.7300&	36792.0000&	1093460.0000\\ \hline
SR(BNL)S&	17.7717	&1559.3200&	144124.0000	&24283100.0000\\ \hline
IPS&	0.0189&	0.1383&	0.8135&	5.1262\\ \hline
IBPS&	0.0148&	0.0182&	0.0242&	0.0446\\ \hline
\end{tabular}
\end{table*}
\begin{figure}[!htp]
\centering
\subfigure[]{\includegraphics[width=0.45\textwidth] {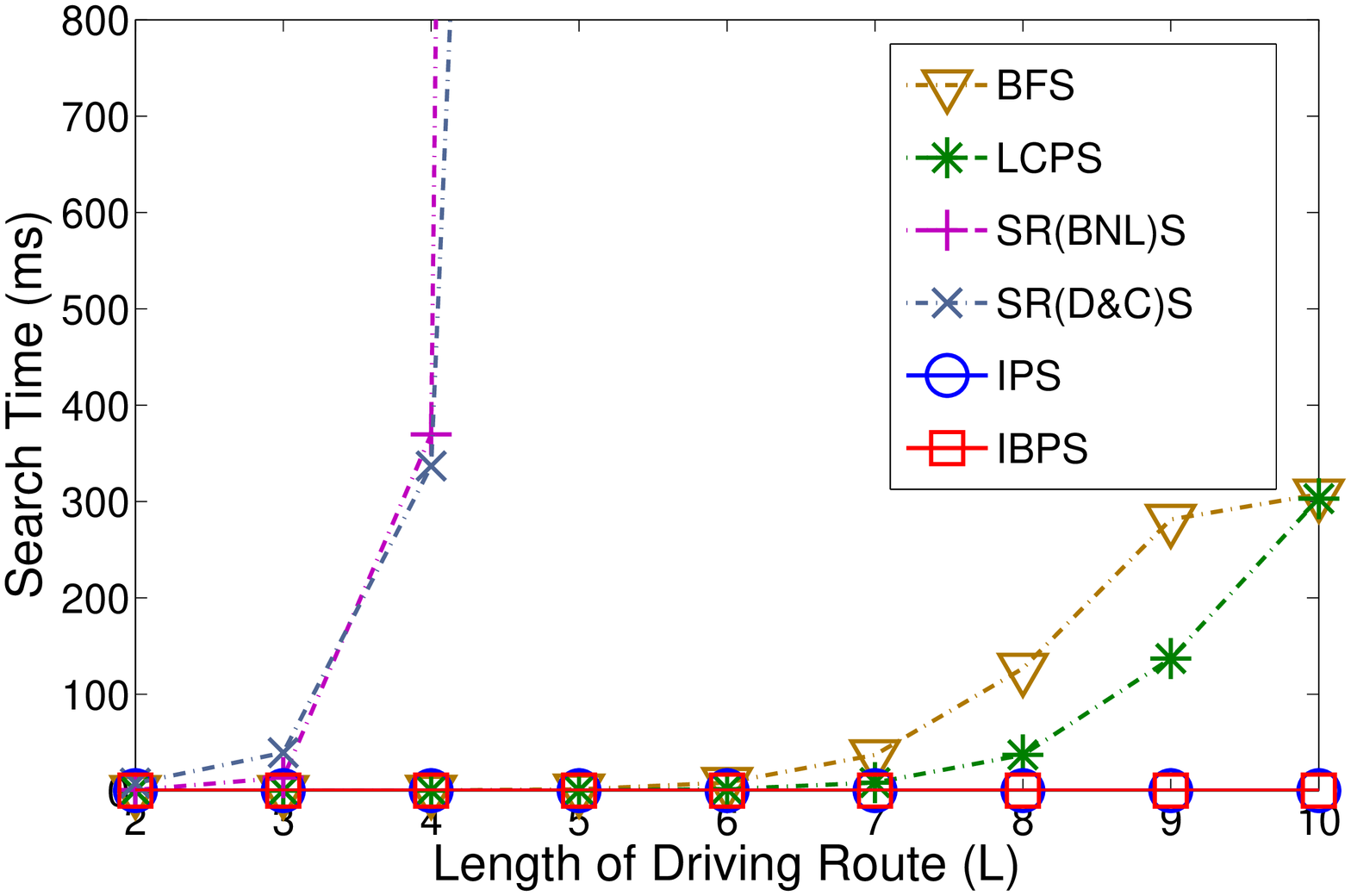}\label{route10searchtimeAll}}
\subfigure[]{\includegraphics[width=0.45\textwidth] {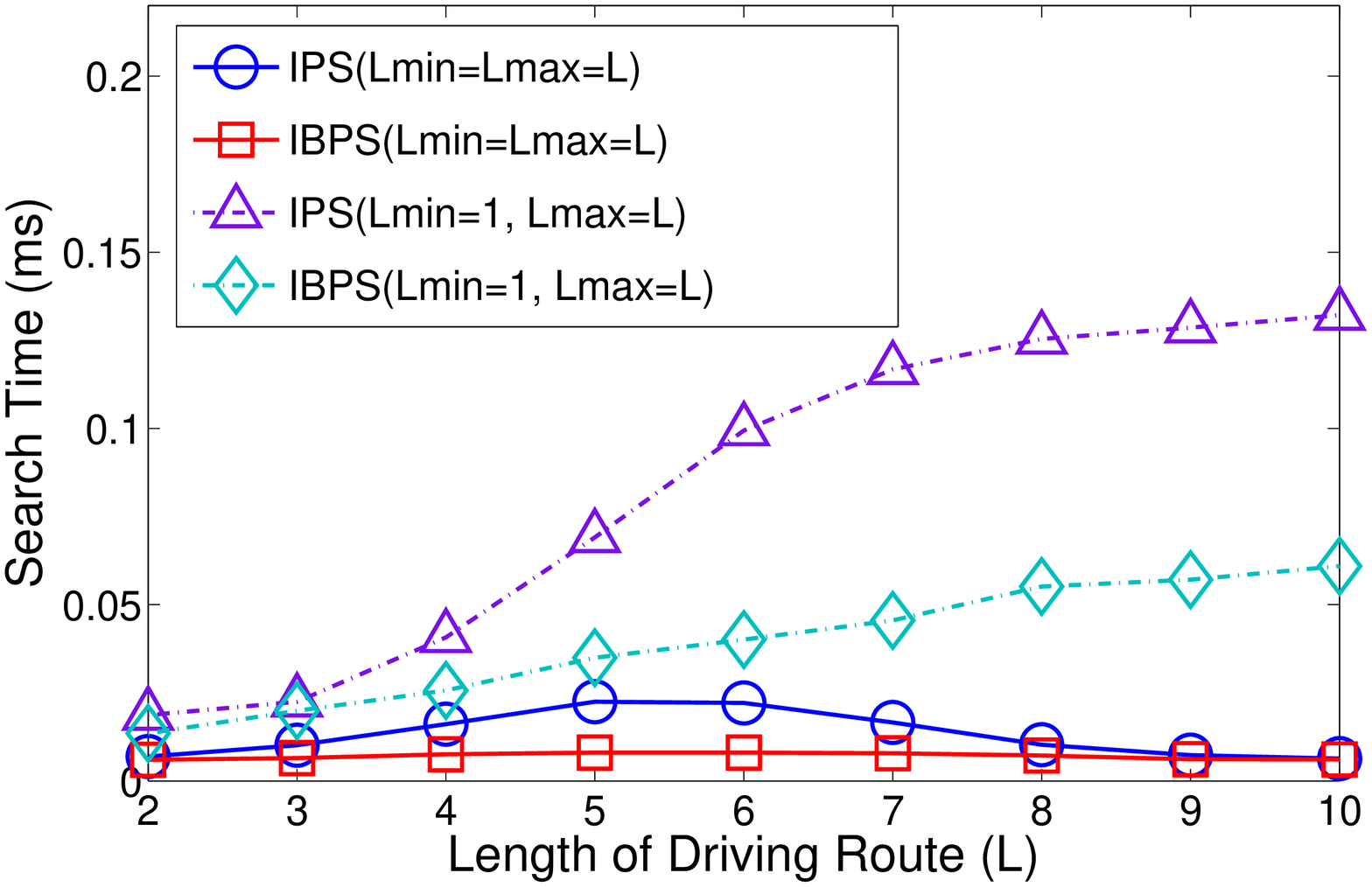}\label{searchtime10IPIBP}}
\caption{The search time varying with the length of suggested driving routes on the synthetic data set with $|C|=10$: (a) an overall comparison of various algorithms; (b) the comparison between IPS and IBPS.\label{searchtime}}
\end{figure}
Figure~\ref{searchtime} shows the curves of the search time varying with the length of suggested routes on the synthetic data set with $|C|=10$. A comparison of the search time for a certain length of driving routes $L$ of all the five algorithms above is given in Figure~\ref{route10searchtimeAll}. Obviously, the search time of SR(D\&C)S and SR(BNL)S dramatically increases along with the increase of the length of the suggested route. Since the number of remaining sequence candidates is very small, the search time consumed by our algorithms IBPS and IPS is always lower than those of other four algorithms, and it becomes more and more obvious as the length of suggested driving routes increases.

In addition, we add some significant tests by t-test when the length is small. Tables~\ref{table7} shows the average search time for LCPS, IPS and IBPS. The table also shows p values from a paired t-test for IPS and IBPS compared to LCPS. It can be observed that the search time consumed by our algorithms is always significantly lower than that of LCPS.

\begin{table*}[t]
\centering
\caption{The paired t-test compared to LCPS \label{table7}}
\begin{tabular}{|c|c|c|c|c|} \hline
&$L=2$&$L=3$&$L=4$&$L=5$ \\ \hline
mean(LCPS)	&0.0079&	0.0191&	0.0717	&0.3381\\ \hline
mean(IPS)	&0.0069&	0.0101&	0.0161	&0.0225\\ \hline
mean(IBPS)	&0.0060&	0.0065&	0.0076	&0.0081\\ \hline
t-test(IPS, LCPS)	&$p=0.001$ &$p=0.000$&$p =0.00$&$p=0.000$\\
&$p \ll 0.01$ &$p \ll 0.01$& $p \ll 0.01$&$p \ll 0.01$\\ \hline
t-test(IPS, LCPS)	&$p=0.000$ &$p=0.000$&$p=0.000$&$p=0.000$\\
&$p\ll 0.01$ &$p \ll 0.01$&$p \ll 0.01$&$p \ll 0.01$\\ \hline
\end{tabular}
\end{table*}

In order to make the trend of search time clearer, the curves of our algorithms with $L_{min}=L_{max}=L$ and $L_{min}=1, L_{max}=L$ on the synthetic data set with $|C|=10$ are shown in figure~\ref{searchtime10IPIBP} respectively. We can see that the search time of our algorithm IPS for a certain length of driving routes $L$ also shows a parabola trend, the same as the trend of the remaining sequence candidates. After the batch pruning, the number of remaining sequences becomes so small and the search time of IBPS is almost constant. The search time of our algorithms IPS and IBPS with $L_{min}=1$ and $L_{max}=L$ gradually increases with the increase of the maximal route length $L$. When $L=10$, the search time of our algorithms for all possible driving route with $1 \le L \le 10$ is still less than 0.14ms.

In summary, our online search algorithm has much lower search time compared to other existing methods. Moreover, it has a more flexible length constraint.

\section{Discussion}
In this section, we discuss some extensions of our method.

\subsection{Multiple Evaluation Functions}
As we know, the PTD function is a measure for evaluating the cost of a driving route. To meet different business requirements, we can adopt other evaluation functions. Two examples are given as follows.

\textbf{The Potential Travel Time (PTT)} \cite{29}. Since the driving time between two pick-up points usually depends on the traffic flow on the road, the distance does not always present the cost of a travel route properly. Thus, it is also valuable to recommend a route with least driving time. Let us give the definition of PTT. Assume that \(T_{c_{i-1} ,c_i }\) is the driving time from \(c_{i-1}\) to \(c_{i}\) during a certain period of time. In Formula~\ref{equ1}, if we replace the distance \(D_{c_{i-1} ,c_i }\) with travel time \(T_{c_{i-1} ,c_i }\) and \(D_{\infty}\) with \(T_{\infty}\), we can get a function of potential travel time

\begin{equation}
\label{equ11}
\scriptsize
F_T(\vec d) = T_{c_0 ,c_1 }  \cdot P(c_1 ) + (T_{c_0 ,c_1 }  + T_{c_1 ,c_2 } ) \cdot {\overline {P(c_1)}} \cdot P(c_2 ) +  \cdots  + T_\infty   \cdot \prod\limits_{i = 1}^L {\overline {P(c_i )} },
\end{equation}
where \(T_{\infty}\) denotes the desired maximum cruising time for a driver to pick up new passengers.

\textbf{The Potential Travel and Waiting Time (PTW).} In real life, the taxi drivers usually get passengers through two ways: cruising on a road and waiting in a place \cite{3,4}. Assume that a cab travels along a driving route \(\vec d = \left\langle {{c_0},{c_1},{c_2}, \ldots ,{c_L}} \right\rangle (1 \le L \le N) \), it has not gotten a passenger when arriving the last pick-up point \({c_L}\) and waits at \({c_L}\). Let the waiting time be a fixed value \({T_W}\) and the probability that it successfully gets a passenger at \({c_L}\) during the waiting time \({T_W}\) be \({P_W(c_L)}\), the cruising time vector of \(\vec d\) is
$
\small
T(\vec d) = \left\langle \begin{array}{l}
 T_{c_0 ,c_1 } ,(T_{c_0 ,c_1 }  + T_{c_1 ,c_2 } ), \ldots ,\sum\limits_{i=1}^L T_{c_{i-1} ,c_{i}} \\
 \end{array} \right\rangle$
and its probability vector is
$
\small
P(\vec d) = \left\langle \begin{array}{l}
 P(c_1 ),\overline {P(c_1 )}  \cdot P(c_2 ), \ldots ,\prod\limits_{i=1}^{L-1} \overline {P({c_i})}
 \cdot P(c_L )  \\
 \end{array} \right\rangle$. Then the time cost of successfully picking up a passenger by cruising is \(F_{C}(\vec d) = T(\vec d) \cdot P(\vec d)\).
The time cost of picking up a passenger by waiting at the last point \({c_L}\) is \(F_{W}(\vec d) = (\sum\limits_{i=1}^L T_{c_{i-1} ,c_{i}} + {T_W}) \cdot \prod\limits_{i=1}^{L} \overline {P({c_i})} \cdot {P_W(c_L)}\). The time cost when a driver does not get passengers after leaving the last point \({c_L}\) can be set to \({F_\infty}={T_\infty} \cdot \prod\limits_{i=1}^{L} \overline {P({c_i})} \cdot \overline {P_W(c_L)}\). Then, the PTW of route $\vec d$ can be given as

\begin{equation}
\label{equ12}
{F_{CW}(\vec d) =F_{C}(\vec d) + F_{W}(\vec d) + {F_\infty}}.
\end{equation}

\begin{figure}[!htp]
\centering
\subfigure[]{\includegraphics[width=0.36\textwidth] {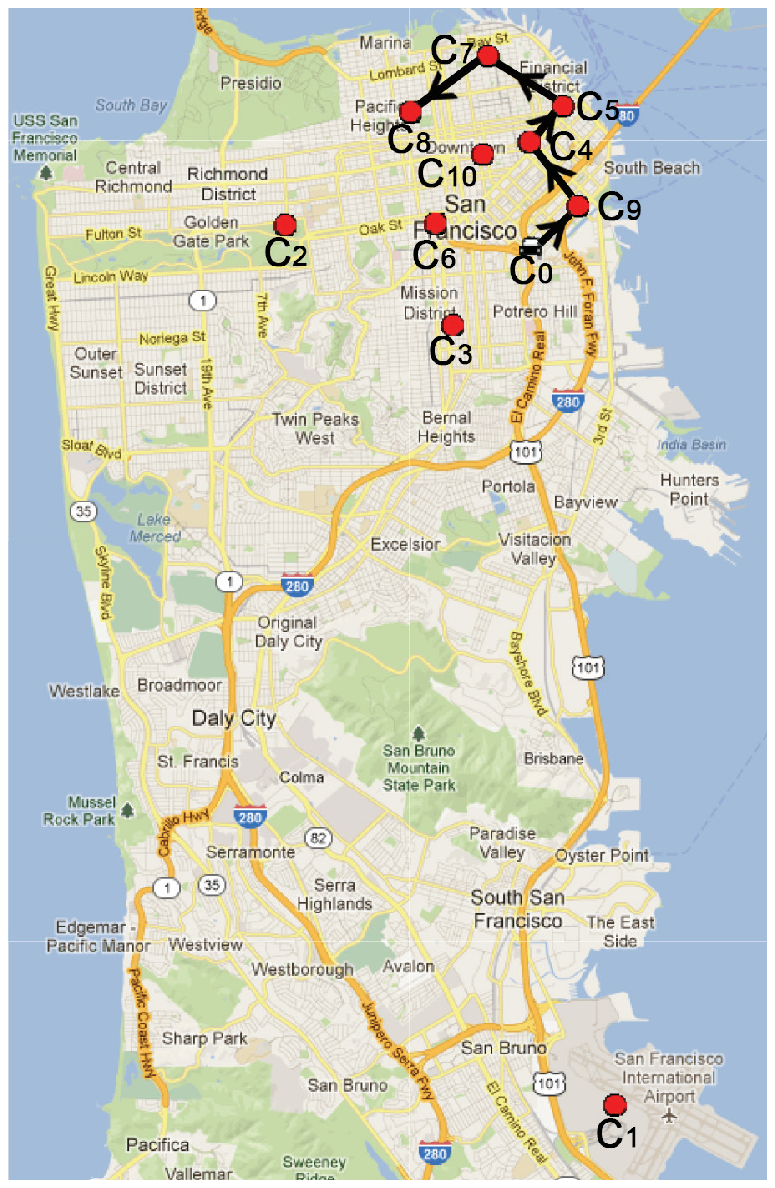}\label{clusters1}}
\subfigure[]{\includegraphics[width=0.36\textwidth] {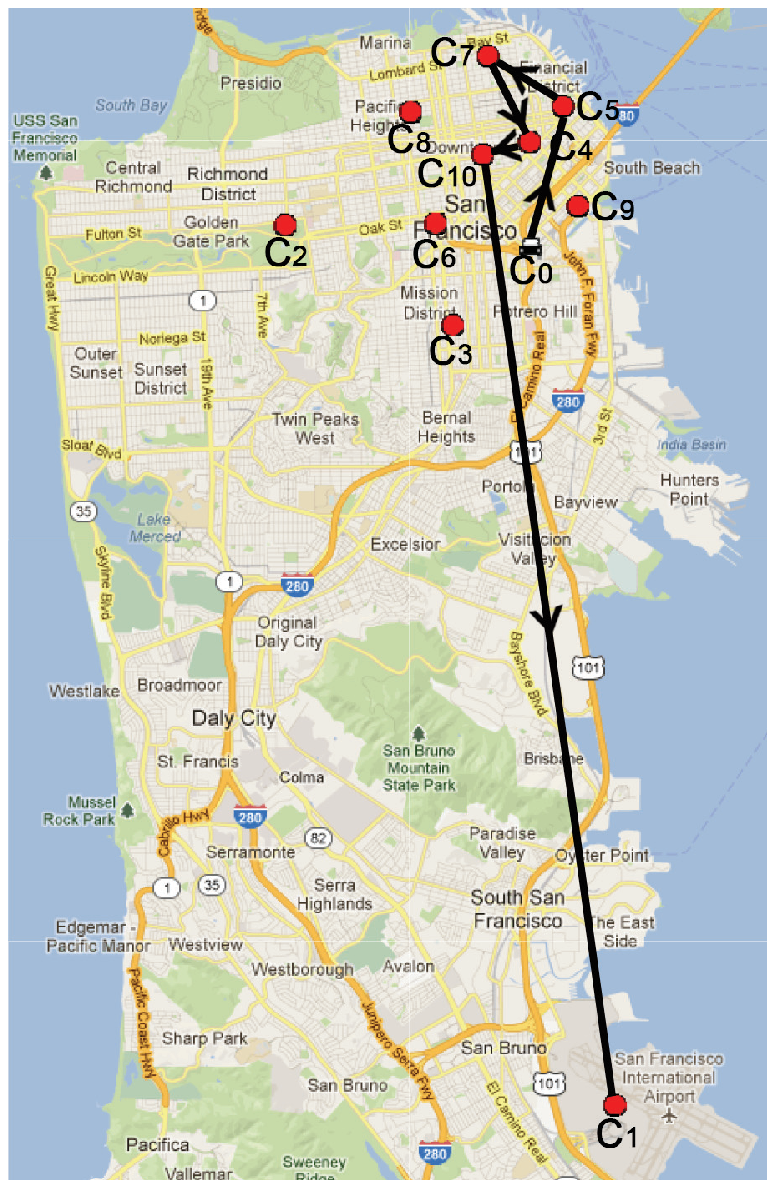}\label{clusters2}}
\vspace{-0.3cm}
\caption{The detected potential pick-up points and ${c_0}$ on real-world data set of 6-7PM.\label{clusters}}
\vspace{-0.2cm}
\end{figure}

With various evaluation functions, we can easily recommend different types of optimal driving routes to the drivers. For example, in Figure~\ref{clusters}, ten pick-up points $c_i (1 \le i \le 10)$ revealed on the real-world data set of 6PM-7PM and the current position $c_0$ of an empty cab are labeled on the map. When we set $L=3$, the optimal driving routes detected by PTD, PTT and PTW are $C0 \to C4 \to C5 \to C7$, $C0 \to C5 \to C7 \to C4$ and $C0 \to C5 \to C4 \to C7$, respectively. For $L=5$, the optimal driving routes evaluated by the PTD is  $C0 \to C9 \to C4 \to C5 \to C7 \to C8$, which is labeled in Figure~\ref{clusters1}  while the optimal driving routes evaluated by the PTT and PTW functions are the same: $C0 \to C5 \to C7 \to C4 \to C10 \to C8$. It is observed that the optimal drive routes are not always the same through different evaluation functions. Therefore, they can be applied to different applications.

\subsection{Recommendation with Destination Constraint}

Actually, our method can deal with the MSR problem with the destination point constraint. For example, if a driver wants to travel to a specified destination point, we can generate all possible potential sequences with the same source and destination points using our proposed algorithm BP-Growth. The slight difference is that we only perform the cost comparison and pruning among the potential sequences with the same source and destination points. Then we can recommend the optimal driving route satisfying the destination constraint to the driver online. Moreover, if the driver wants to wait for passengers at the destination point, we can consider the destination point as a temporal parking place and perform recommendation using the PTW measure presented above. For example, as shown in Figure~\ref{clusters2}, an optimal driving route revealed by the PTW function is $C0 \to C5 \to C7 \to C4 \to C10 \to C1$ when we set $C1$ as the destination, $L_{min}=3$ and $L_{max}=5$.

\subsection{Load Balance for Parallel Recommendations}

We briefly discuss how to make the recommendation be suitable for many cabs in the same area at the same time. For the generalized MSR problem, since both of the proposed algorithms IP and IBP deal with the potential sequences with the same source point, we can obtain the optimal driving routes starting from each pick-up point. Thus, we can get $N$ optimal driving routes with different source points. To perform the recommendation for multiple empty cabs simultaneously, we can adopt the load balancing techniques used in \cite{2}. The round-robin strategy maintains the number of the multiple empty cabs requesting the service, chooses one from the  N optimal driving routes  by the system in a circular manner \cite{26, 27} for the \(k_{th}\) request. For example, we can recommend the No.1 route with source point $c_1$ to the empty cab that first request the service, recommend the No.2 with route source point $c_2$ to the second empty cab, etc. And recommend the No.1 route again for the \((N+1)_{th}\) request.

\section{Conclusion}
This paper presents a dynamic programming based method to solve the problem of mobile sequential recommendation. The proposed method utilizes the iterative nature of the cost function and multiple pruning policies which greatly improve the pruning effect. The overall time complexity for handling mobile sequential recommendation problem without length constraint has been reduced from \(O(N!)\) to \(O({N^2} \cdot {2^N})\). Experimental results show that the pruning effect and the online search time are better than those of other existing methods. In the future, it will be interesting to use parallel algorithms for sequence generation and recommendation.

\section{Acknowledgements}

This research is supported by the Nature Science Foundation of China (No.61173093 and No.61202182). The authors would like to thank Prof. Jiawei Han and Dr. Jing Gao for their thoughtful
comments on this paper.

\bibliographystyle{plain}

\end{document}